\renewcommand{\Do}{\textbf{do}\addtocounter{indent}{1}}
\renewcommand{\putindents}{\ifnum\value{thisindent}>0\>\addtocounter{thisindent}{-1}\putindents\fi}
\renewcommand{\Else}{\kill\addtocounter{indent}{-1}\liprint\textbf{else} \> \addtocounter{indent}{1}}
\renewcommand{\Then}{\textbf{then}\addtocounter{indent}{1}}
\newcommand{\inv}[1]{\textbf{(i#1)}}
\newcommand{\father}{\func{f}}
\newcommand{\depth}{\func{d}}
\definecolor{myBlue}{rgb}{0.5,0.5,1}
\definecolor{myRed}{rgb}{0.9,0.3,0.1}
\definecolor{myYellow}{rgb}{0.9,0.9,0}
\definecolor{myGreen}{rgb}{0.1,1,0}
\title{Online validation of the $\pi$ and $\pi'$ failure functions
	}
\begin{document}
\sloppy

\author{Pawe\l{} Gawrychowski
	\and Artur Je\.z \and \L{}ukasz Je\.z
	}
\institute{Institute of Computer Science,
	University of Wroc{\l}aw,\\
	\email{gawry@cs.uni.wroc.pl}, \email{ \{aje,lje\}@ii.uni.wroc.pl}
}

\maketitle


\begin{abstract}
Let $\pi_w$ denote the failure function of the Morris-Pratt algorithm for a word
$w$.
In this paper we study the following problem: given an integer array $A[1 \twodots n]$, is there
a word $w$ over arbitrary alphabet $\Sigma$ such that $A[i]=\pi_w[i]$ for all $i$?
Moreover, what is the minimum cardinality of $\Sigma$ required?
We give a real time linear algorithm
for this problem in the unit-cost RAM model with $\Theta(\log n)$
bits word size.
Our algorithm returns a word $w$ over minimal alphabet such that
$\pi_w = A$ as well and uses just $o(n)$ words of memory.
Then we consider function $\pi'$ instead of $\pi$
and give an online $\mathcal O (n \log n)$ algorithm for this case.
This is the first polynomial algorithm for online version of this problem.
\end{abstract}


\section{Introduction}
The Morris-Pratt algorithm~\cite{MP}, first linear time pattern matching algorithm, is well known for
its simple and beautiful concept. It simulates a forward-prefix-scan DFA for 
pattern matching~\cite{AG-book} by using a carefully chosen \emph{failure
function} $\pi$, also known as a \emph{border array}.
The algorithm utilizes
values of $\pi$ for all prefixes of the pattern. It
behaves like the automaton in the sense that it reads each symbol of the text
once and simulates the automaton's transition.
The amortized time per transition is constant, and the required values
of the prefix function can be calculated beforehand in linear time in a similar fashion.

The failure function itself is of interest as, for instance, it captures all the
information about periodicity of the word. Hence it is often used in word
combinatorics and numerous text algorithms,
see~\cite{AG-book,Crochemore-Rytter-book,jewels}.
The Morris-Pratt algorithm has many variants. In particular,
the Knuth-Morris-Pratt algorithm~\cite{KMP} works in exactly the same manner, but uses a slightly different
failure function, namely the \emph{KMP array} $\pi'$ (or \emph{strong failure function}).
The time bounds for KMP algorithm are precisely
the same as for MP algorithm, but KMP has smaller upper bound on time spent processing a single letter --- for KMP
this bound is $\mathcal O(\log m)$,
whereas for MP it is $\mathcal O(m)$, where $m$ denotes the length of the
pattern.

We investigate the following problem: given an integer array $A[1 \twodots n]$, is there
a word $w$ over an arbitrary alphabet $\Sigma$ such that $A[i]=\pi_w[i]$ for all $i$, where $\pi_w$
denotes the failure function of the Morris-Pratt algorithm for the word $w$. If so, what is the minimum cardinality of
the alphabet $\Sigma$ over which such word exists?
%
%
Pursuing these questions is motivated by the fact that in word combinatorics one
is often interested only in values of $\pi_w$ for every prefix of a word $w$
rather than $w$ itself. Thus it makes sense to ask if there is a word $w$ that
admits $\pi_w=A$ for a given array $A$. Validation of border arrays is also an
important building block of many algorithms that generate all valid border
arrays~\cite{DuvalLecroqLefebvreladnykod,firstlinearverify,countingdistinct}.

We are interested in an \emph{online}
algorithm, i.e. one that receives the input array values one by one, and 
is required to output the answer after reading each such single value.
The maximum time spent on processing a single piece of input is the \emph{delay}
of the algorithm. When the delay is constant, we call the algorithm 
\emph{real time}.
%
%
This and similar problems were addressed before by other researchers. Recently
a linear online algorithm for (closely related) \emph{prefix array} validation
has been given~\cite{Crochemore}.
%
%
A simple linear online algorithm for $\pi$ validation is
known~\cite{DuvalLecroqLefebvreopiprim}, though it has a $\min(n,|\Sigma|)$
delay. Authors of~\cite{DuvalLecroqLefebvreopiprim} were unaware that in this
case $|\Sigma| = \mathcal O (\log n)$~\cite{countingdistinct},
hence the delay of this algorithm is in fact logarithmic.

We provide an online real time algorithm working in unit-cost RAM model (i.e. we assume
that words consisting of $\Theta(\log n)$ bits can be operated on in constant
time) and using $\mathcal O(n\log\log n)$ bits. We show that $\Omega (n)$ bits of space
are necessary if the input is read-once only.

Then we turn our attention to $\pi'$. There is an offline linear bijective
transformation between $\pi$ and $\pi'$. This transformation can be performed
with access to the arrays only, i.e. with no access to the word itself.
Thus it is possible to check offline whether there exists $w$ such that $A = \pi'_w$
in linear time. The task becomes much harder when an online algorithm is required.
Our online algorithm, which is the first
polynomial algorithm for the problem, has running time $\mathcal O (n \log n)$.

This problem was investigated for a slightly different variant of $\pi'$
and an offline validation algorithm for this variant is known
\cite{DuvalLecroqLefebvreopiprim}. The function $g$ considered therein can be
expressed as $g[n] = \pi'[n-1]+1$. The aforementioned bijection between $\pi$
and $\pi'$ cannot be applied to $g$ as it essentially uses the unavailable value
$\pi[n] = \pi'[n]$. While instances on which the algorithm runs in time
$\Theta(n^2)$ are known, no polynomial upper bound on the algorithm's running
time was provided in~\cite{DuvalLecroqLefebvreopiprim}. The algorithm for online
$\pi'$ validation we provide can be applied to $g$ validation with no changes.




\section{Preliminaries}
\label{sec:preliminaries}
For $w \in \Sigma^*$, we denote its length by $n(w)$ or simply $n$. For
$v,w \in \Sigma^*$, by $vw$ we denote the concatenation of $v$ and $w$. We say
that $u$ is a \emph{prefix} of $w$ if there is $v \in \Sigma^*$, such that
$w=uv$. Similarly, we call $v$ a \emph{suffix} of $w$ if there is
$u \in \Sigma^*$ such that $w=uv$. A word $v$ that is both a prefix and
a suffix of $w$ is called a \emph{border} of $w$. By $w[i]$ we denote the $i$-th
letter of $w$ and by $w[i \twodots j]$ we denote the \emph{subword}
$w[i]w[i+1]\ldots w[j]$ of $w$. We call a prefix (respectively: suffix, border)
$v$ of the word $w$ \emph{proper} if $v \neq w$, i.e. it is shorter than $w$
itself.

For a word $w$ its failure function $\pi_w$
is defined as follows: $\pi_w[i]$ is the length of the longest proper border of $w[1 \twodots i]$
for $i=1, 2 \ldots , n$. By $\pi_w^{(k)}$ we denote $\pi_w$ composed $k$
times with itself, namely $\pi_w^{(0)}[i]:=i$ and $\pi_w^{(k+1)}[i]:=\pi_w[\pi_w^{(k)}[i]]$.
This convention applies to other functions as well.
We omit the subscript $w$ in $\pi_w$, whenever it is unambiguous.

\begin{wrapfigure}{r}{0.52\textwidth}
\vspace{-1cm}
\begin{codebox}
\Procname{$\proc{Compute-$\pi$}(w)$}
\zi $\pi_w[1] \gets 0$ , $k \gets 0$ 
\zi \For $i \gets 2$ \To $n$ \label{pi-for-loop} \Do
\zi \While $k > 0$ and $w[k+1] \neq w[i]$ \Do 
\zi $k \gets \pi_w[k]$ 
\End
\zi \If $w[k+1] = w[i]$ \kw{then} $k \gets k+1$
\zi $\pi_w[i] \gets k$ 
\End
\end{codebox}
\vspace{-1.6cm}
\end{wrapfigure}

We state some simple properties of borders and the prefix function. If $u$, $v$ and $w$ are words,
such that $|u| \leq |v| \leq |w|$ and $v$ is a border of $w$, then $u$ is a border of $v$ if and only if
$u$ is a border of $w$. As a consequence, every border of $w[1 \twodots i]$ has length $\pi_w^{(k)}[i]$ for
some integer $k \geq 0$.

\begin{wrapfigure}{r}{0.5\textwidth}
\includegraphics[width=0.5\textwidth]{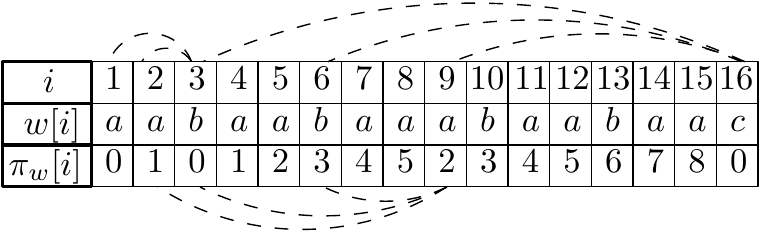}
\caption{Function $\pi$ for word $aabaabaaabaac$. 
The dashed lines represent the
consecutive tries of \proc{Compute-$\pi$} when computing $\pi_w[i]$.
\label{fig:calculating pi example}
}
\vspace{-0.8cm}
\end{wrapfigure}


The \emph{strong failure function} $\pi'$ is defined as follows:
$\pi'_w[n] := \pi_w[n]$, and for $i < n$, $\pi'[i]$ is the length of the longest
(proper) border of $w[1 \twodots i]$, such that $w[\pi'_w[i]+1] \neq w[i+1]$.
If no such border exists, $\pi'[i]=-1$.

It is well-known that $\pi_w$ and $\pi'_w$ can be obtained from one another in linear time, using additional
lookups in $w$ to check conditions of the form $w[i] = w[j]$.
What is perhaps less known, these lookups are
not necessary, i.e. there is a bijection between $\pi_w$ and $\pi'_w$. Values of this function, as well as its
inverse, can be computed in linear time. The correctness of the two procedures
below follows from two (equivalent) observations:
\begin{align} \label{eq:when pi is pi'}
w[i+1] &\neq w[\pi[i]+1] &\iff
\pi[i+1] < \pi[i]+1 &\iff \pi'[i] = \pi[i] \enspace , \\
\label{eq:when pi is not pi'}
w[i+1] &= w[\pi[i]+1] &\iff
\pi[i+1] = \pi[i]+1 &\iff 
\pi'[i] = \pi'[\pi[i]]
	\enspace .
\end{align}

\vspace{-0.5cm}
\begin{minipage}[b]{0.45\linewidth}

\begin{codebox}
\Procname{\proc{Compute-$\pi'$-From-$\pi$}$(\pi)$}
\zi 	$\pi'[0] \gets -1$, $\pi'[n] \gets \pi[n]$ 
\zi 	\For $i \gets 1$ \To $n-1$ \Do 
\zi 		\If $\pi[i+1] = \pi[i]+1$ \Then 
\zi 			$\pi'[i] \gets \pi'[\pi[i]]$
\zi 		\Else $\pi'[i] \gets \pi[i]$
		\End
	\End
\end{codebox}

\end{minipage}
\hspace{0.5cm}
\begin{minipage}[b]{0.45\linewidth}

\begin{codebox}
\Procname{\proc{Compute-$\pi$-From-$\pi'$}$(\pi')$}
\zi $\pi[n] \gets \pi'[n]$ 
\zi \For $i \gets n-1$ \Downto $1$ \Do 
\zi $\pi[i] \gets \max\{\pi'[i],\pi[i+1]-1\}$ 
\End
\end{codebox}
 
\end{minipage}


\section{Online border array validation}
\label{sec:pi verification}
Let $T$ be a graph with vertices $\{ 1, 2, \ldots, n\} $ and directed edges
$\{(k,\pi[k-1]+1) \colon k=2,\ldots,n\}$, see Fig.~\ref{fig:pitree} for an
example. Observe that $T$ is a directed tree: each vertex except the vertex $1$
has exactly one outgoing edge, and since $\pi[i] < i$, the graph is acyclic
and the vertex $1$ is reachable from every other vertex. Thus vertex $1$ is
the root of $T$ and all the edges are directed towards it. Therefore we use
the standard notation $\father[i]$ to denote the unique out-neighbour of $i$
for $i > 1$. We also call $i'$ an \emph{ancestor} of
$i > 1$ if $i' = \father^{(k)}[i]$ for some $k > 0$
(note that $i$ is not its own ancestor).

\begin{wrapfigure}{r}{0.41\textwidth}
\includegraphics[width=0.41\textwidth]{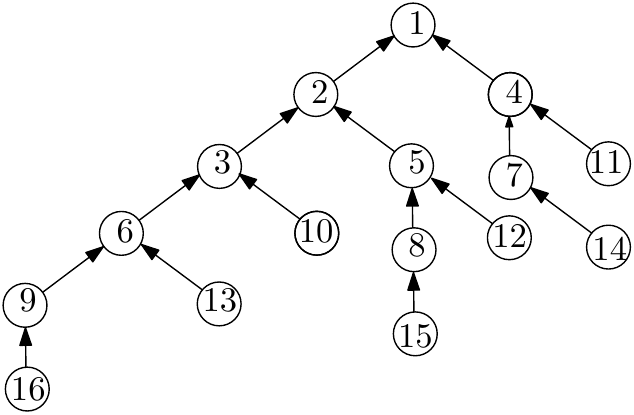}
\caption{An example of tree $T$ for a word from Fig.~\ref{fig:calculating pi example}.
\label{fig:pitree}}
\vspace{-0.87cm}
\end{wrapfigure}

Define an analogue structure $T'$ with $\pi'$ instead of $\pi$:
$\father'[i] = \pi'_w[i-1]+1$. If $\father'[i]=0$ for some $i$
then $i$ has no outgoing edge in $T'$.
Thus $T'$ is a forest and not necessarily a tree.
By~\eqref{eq:when pi is pi'}--\eqref{eq:when pi is not pi'}
$\father'[i]$ can be expressed using $\pi$ and $f$:
\begin{equation}
\label{jak wyglada tatus'}
\father'[i] = \begin{cases}
\father[i] & \text{if } \pi[i] < \father[i] \enspace ,\\
\father'[\father[i]] & \text{if } \pi[i] = \father[i] \enspace .
\end{cases}
\end{equation}

Our approach to online border array validation is as follows: assume there is
a word $w$ admitting $\pi_w=A$ and implicitly construct $T$ for $\pi_w=A$. Then
recover $\pi'_w$ from $\pi_w$ and construct $T'$. Using both $T$ and $T'$,
invalidity of $A$ can be detected as soon as it occurs.


We use the terms \emph{father}, \emph{ancestor}, etc. when referring to $A$,
as it uniquely defines the graphs $T$ and $T'$.
Edges in $T$ reflect the comparisons done by $\proc{Compute-$\pi$}(w)$ for each
position in $w$, or equivalently, each vertex of $T$. In what follows, we
formalise the connection between ancestors in $T$ and (in)equalities between
certain symbols of $w$ that hold under the assumption that $A = \pi_w$.

\begin{lemma}
\label{fact:how_ancestors_look_like}
Suppose that $A = \pi_w$ for some word $w$. Then for $i=2, \ldots, n$ one of the following conditions holds:
\begin{enumerate}
	\item $A[i]=0$ and $w[i] \neq w[i']$ for all ancestors $i'$ of $i$,
	\item $A[i] \neq 0 $ and there exists an ancestor $i' = \pi[i]$ of $i$ such that $w[i] = w[i']$
	and $w[i] \neq w[i'']$ for all interior nodes $i''$ on the directed path from $i$ to $i'$.
\end{enumerate}
\end{lemma}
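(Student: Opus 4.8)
The plan is to read the statement off the execution of \proc{Compute-$\pi$}, position by position, matched against the tree $T$. First I would make the shape of $T$ explicit near a fixed vertex $i$: since $\father[j]=\pi[j-1]+1$ and $\pi[m]<m$, iterating the father map gives $\father^{(k)}[i]=\pi_w^{(k)}[i-1]+1$ for every $k\geq 1$, and by the border properties recalled above the numbers $\pi_w^{(k)}[i-1]$ with $k\geq 1$ are exactly the lengths of the proper borders of $w[1\twodots i-1]$ (the empty border among them), listed in strictly decreasing order. Hence the ancestors of $i$ in $T$ are precisely the vertices $i'$ for which $i'-1$ is the length of a proper border of $w[1\twodots i-1]$, and, walking up from $i$, they appear as $\father[i]>\father^{(2)}[i]>\cdots>1$.

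Next I would invoke the elementary fact underlying the correctness of \proc{Compute-$\pi$}: for $\ell\geq 1$, the prefix $w[1\twodots\ell]$ is a border of $w[1\twodots i]$ iff $w[1\twodots\ell-1]$ is a border of $w[1\twodots i-1]$ and $w[\ell]=w[i]$ (obtained by deleting, respectively appending, the last letter). Combined with the first paragraph, this says: $\ell$ is the length of a proper border of $w[1\twodots i]$ iff $\ell$ is an ancestor of $i$ with $w[\ell]=w[i]$. Since $A[i]=\pi[i]$ is the length of the longest proper border of $w[1\twodots i]$, the dichotomy of the lemma drops out. If no ancestor $i'$ of $i$ has $w[i']=w[i]$, then $w[1\twodots i]$ has no non-empty border, so $A[i]=0$, which is item~1. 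Otherwise $A[i]=\pi[i]$ equals the largest ancestor $i'$ of $i$ with $w[i']=w[i]$.

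In that second case put $i'=\pi[i]$. Because the ancestors of $i$ decrease along the path toward the root, ``largest ancestor carrying the letter $w[i]$'' is the same as ``first such ancestor met when climbing from $i$''; therefore every ancestor of $i$ strictly between $i$ and $i'$ on that path is an ancestor exceeding $i'$, hence carries a letter $\neq w[i]$, while $w[i']=w[i]$ by the choice of $i'$. Since the vertices strictly between $i$ and $i'$ on the directed path are exactly the interior nodes $i''$ of that path, this is item~2. (Alternatively the same conclusion can be read directly off the \textbf{while} loop: in computing $\pi[i]$ the procedure examines $k=\pi_w[i-1],\pi_w^{(2)}[i-1],\dots$, i.e.\ the ancestors $k+1$ of $i$ from the father upward, each time testing $w[k+1]=w[i]$; it stops at the first match, setting $\pi[i]=k+1$, and outputs $0$ if no match ever occurs.)

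I expect the only point requiring some care to be the bookkeeping of the first paragraph: verifying that the father iterates of $i$ are exactly $\pi_w^{(k)}[i-1]+1$, that these enumerate the proper border lengths of $w[1\twodots i-1]$ without repetition, and that they decrease --- everything else is then formal, including the identification of ``interior node of the path from $i$ to $i'$'' with ``ancestor of $i$ that is larger than $\pi[i]$''. All of this rests on the single inequality $\pi_w[m]<m$, so no genuine obstacle remains.
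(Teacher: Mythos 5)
Your proof is correct and, as you yourself note in the parenthetical at the end, the core observation is the same one the paper uses: walking up $T$ from $i$ traverses exactly the values $\pi_w^{(k)}[i-1]+1$ that \proc{Compute-$\pi$} tests, and the first match (or exhaustion) determines $\pi[i]$. You spell out the underlying border combinatorics more explicitly than the paper does, but it is the same argument.
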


\begin{proof}

It follows from $\proc{Compute-$\pi$}$: when calculating $\pi_w$,
$\proc{Compute-$\pi$}$ repeatedly checks, whether $w[\pi^{(k)}[i-1] + 1 ] = w[i]$
for successive values of $k$, starting with $k=1$. Precisely, it checks, whether
$w[i'] = w[i]$ for successive ancestors $i'$ of $i$. This test immediately ends when smallest $k$
is found, such that it satisfies $w[\pi^{(k)}[i-1] + 1 ] = w[i]$ or $\pi^{(k)}[i-1]=0$.
In the latter case $w[i] \neq w[i']$ for all ancestors $i'$ of $i$ and $A[i]=0$,
whereas in the former case $A[i] \neq 0 $ and there exists an ancestor $i' = \pi[i]$ of $i$,
such that $w[i] = w[i']$ and $w[i] \neq w[i'']$ for all $i'' \neq i,i'$ on the path from $i$ to $i'$.
\qed
\end{proof}

This follows from the way \proc{Compute-$\pi$} works. Refer to Fig.~\ref{fig:pitree} for an example.
Using $T'$ a slightly stronger statement can be formulated:

\begin{lemma}
\label{pi is usually pi'}
Suppose that $A = \pi_w$ for some word $w$. Then 
either $A[i]= \father[i]$ or $A[i]=0$ or $A[i] = \father'^{(k)}[i]$ for some $k \geq 1$.
\end{lemma}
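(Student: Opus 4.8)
The plan is to case-split on the value $A[i]=\pi_w[i]$. If $A[i]=0$ or $A[i]=\father[i]$ there is nothing to prove, so assume $A[i]\notin\{0,\father[i]\}$. Then Lemma~\ref{fact:how_ancestors_look_like} forces us into its case~(2): $i^* := \pi_w[i] = A[i]$ is a $T$-ancestor of $i$. Write the $T$-ancestors of $i$ as $a_1=\father[i]>a_2>\cdots>a_r=1$, so $i^*=a_j$ for some index $j$; since $A[i]\neq\father[i]$ we have $j\ge 2$ and in particular $A[i]=a_j<a_1=\father[i]$. Lemma~\ref{fact:how_ancestors_look_like}(2) also gives $w[a_j]=w[i]$ while $w[a_t]\neq w[i]$ for every interior vertex $a_t$ ($1\le t<j$) of the $T$-path from $i$ to $i^*$; in particular $w[\father[i]]\neq w[i]$, which by \eqref{eq:when pi is pi'} at position $i-1$ is equivalent to $\pi_w[i]<\father[i]$, and then \eqref{jak wyglada tatus'} yields $\father'[i]=\father[i]=a_1$. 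It remains to prove $A[i]=\father'^{(k)}[i]$ for some $k\ge 1$ (in fact $k\ge 2$).

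The first ingredient is a reformulation of \eqref{jak wyglada tatus'}: by unfolding the recursion, for every non-root vertex $v$ of $T$, $\father'[v]$ is the \emph{nearest} $T$-ancestor $v'$ of $v$ with $w[v']\neq w[v]$, if such an ancestor exists (and otherwise $v$ is a root of $T'$). Indeed, \eqref{jak wyglada tatus'} either returns $\father[v]$ — which happens exactly when $w[v]\neq w[\father[v]]$, i.e. when the nearest ancestor already has the right property — or recurses into $\father'[\father[v]]$ when $w[v]=w[\father[v]]$, in which case, by induction, $\father'[\father[v]]$ is the nearest $T$-ancestor of $\father[v]$ (equivalently, of $v$, as $\father[v]$ carries the letter $w[v]$) whose letter differs from $w[\father[v]]=w[v]$; the recursion terminates because it walks up $T$ towards its root.

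Using this I trace the path out of $i$ in $T'$. I claim $\father'^{(m)}[i]=a_{j_m}$ for a strictly increasing sequence $1=j_1<j_2<\cdots$ with $j_m\le j$, until $j_m=j$. The base case $\father'^{(1)}[i]=a_1$ was shown above. For the inductive step, suppose $\father'^{(m)}[i]=a_{j_m}$ with $j_m<j$. Then $a_{j_m}$ is an interior vertex, so $w[a_{j_m}]\neq w[i]=w[a_j]$, and $a_{j_m}$ is a non-root vertex of $T$ since $j_m<j\le r$; hence by the first ingredient $\father'[a_{j_m}]$ is the nearest $T$-ancestor of $a_{j_m}$ with a different letter, which exists because $a_j$ is one such candidate, so it equals $a_{j_{m+1}}$ with $j_m<j_{m+1}\le j$. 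A strictly increasing sequence of positive integers bounded above by $j$ must reach $j$, so $\father'^{(k)}[i]=a_j=i^*=A[i]$ for some $k\ge 1$, completing the proof.

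I expect the main obstacle to be the first ingredient: verifying precisely that iterating \eqref{jak wyglada tatus'} "skips exactly the $T$-ancestors carrying the letter $w[v]$", together with the boundary bookkeeping (when the unfolding would reach the root of $T$, which corresponds to $v$ being a root of $T'$, and the small check that every vertex on which we invoke $\father'$ is genuinely non-root). Once that characterization is pinned down, what remains is the short monotonicity argument above.
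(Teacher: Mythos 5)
Your proof is correct and takes essentially the same approach as the paper's: both rest on the observation that $\father'[v]$ is the nearest $T$-ancestor of $v$ carrying a letter different from $w[v]$, so that iterating $\father'$ from $i$ walks up the $T$-path toward $A[i]$ without overshooting it, since by Lemma~\ref{fact:how_ancestors_look_like} every interior node of that path carries a letter different from $w[i]=w[A[i]]$. The paper presents this as a two-sentence sketch about which candidates \proc{Compute-$\pi$} can ``safely skip''; your version supplies the same idea in rigorous form, via the unfolding of \eqref{jak wyglada tatus'} and the bounded strictly increasing index sequence.
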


\begin{proof}
When calculating the candidates for $\pi[i]$ we look at the sequence of values
$\father[i],\father^{(2)}[i],\ldots$ but whenever $w[\father^{(k)}[i]]=w[\father^{(k+1)}[i]]$, we
can safely skip $\father^{(k+1)}[i]$. The largest
$i'$ such that $i' = \father^{(k')}[i-1]+1$ and $w[i'] \neq w[i] $ is $i'=\pi'[i-1]+1=\father'[i]$.
\qed
\end{proof}

The following criteria follow from Fact~\ref{fact:how_ancestors_look_like}: if $A = \pi_w$ for some $w$ then
\begin{eqnarray}
A[1]=0 \label{A is 0}
\label{A_is_an_ancestor}
\quad \text{ and } \forall i > 0 \quad
A[i]> 0 &\Longrightarrow& \exists k \quad A[i] = \father^{(k)}[i] \; , \\
\label{different_letters}
\forall i > 0 \quad
\forall {k>0} \quad A[i] < \father^{(k)}[i] &\Longrightarrow& A[i] \neq A[\father^{(k)}[i]] \enspace .
\end{eqnarray}


Conditions~\eqref{A is 0}--\eqref{different_letters} 
are necessary and sufficient for $A$ to be a valid $\pi$ array \cite{DuvalLecroqLefebvreopiprim}. 
They yield an algorithm for testing whether $A$ is a valid $\pi$ table and calculating
the minimal size of required alphabet \cite{DuvalLecroqLefebvreladnykod}.

Instead of checking whether there is $j$ on the path from $i$ to the root
such that $j = A[i]$ and $j$ is a valid candidate for $\pi[i]$, one can store
all the valid candidates for $i$ at the node $i$ and do the required checks
locally.
It turns out that the sets of valid candidates satisfy
a simple recursive formula \cite{DuvalLecroqLefebvredlugie}:
\begin{equation}
\label{eq:candidates}
\id{cand}[i] = \begin{cases}
\left\{0,\father[i]\right\}\cup(\id{cand}[\father[i]]\setminus \left\{A[\father[i]]\right\}) &\text{ if }A[i]\neq 0\\
\left\{0\right\} &\text{ if }A[i]=0
\end{cases}
\end{equation}

\begin{wrapfigure}{r}{0.6\textwidth}
\vspace{-1.2cm}
\begin{codebox}
\Procname{$\proc{Validate}(A)$}
\zi \If $A[1] = 0$ \Then
\zi \Error $A$ not valid at $1$
	\End
\zi $\id{cand}[1] \gets \emptyset$, $w[1]\gets 1$
\zi \For $p \gets 2$ \To $n$ \Do
\zi		\If $A[p]=0$ \Then
\zi				$\id{alph}[p] \gets \id{alph}[\father[i]]+1 $
\zi				$\id{max-alph} \gets \max(\id{max-alph}, \id{alph}[p])$
\zi				$w[p] \gets \id{alph}[p]$
\zi				$\id{cand}[p] \gets \emptyset$
			\Else	$\id{cand}[p] \gets \id{cand}[\father(p)] \setminus \{ A[\father(p)] \} \cup \father(p)$
\zi				\If $A[p] \notin \id{cand}[p]$ \Then
\zi 				\Error $A$ not valid at $p$
				\End
\zi				$w[p] \gets w[A[p]]$, $\id{alph}[p] \gets \id{alph}[\father[i]]$
		\End	
	\End
\end{codebox}
\vspace{-1cm}
\end{wrapfigure}

Moreover, in~\eqref{eq:candidates}, $\id{cand}[\father'[i]]\setminus
\left\{A[\father'[i]]\right\}$ instead of $\id{cand}[\father[i]]\setminus
\left\{A[\father[i]]\right\}$ can be used, leading to a more sophisticated
algorithm $\proc{Validate}(A)$~\cite{DuvalLecroqLefebvredlugie}. It runs in
linear time and space, and has $\mathcal{O}(\min (n,|\Sigma|))$
delay~\cite{DuvalLecroqLefebvredlugie}. Note that the running time bound is not
obvious, as a set of candidates is kept at each node. It can be bounded by
noting that each valid candidate corresponds to one non-trivial transition of
the automaton recognizing the language $\Sigma^\ast w$, and the number of such
transition is linear, regardless of the size of $\Sigma$~\cite{Simon}.
Moreover, the minimal size of $\Sigma$ is
$\mathcal O (\log n)$~\cite[Th. 3.3a]{countingdistinct},
of which authors of~\cite{DuvalLecroqLefebvredlugie} were unaware.


Let $\depth'$ denote the depth of $T'$. In our algorithm
\proc{Validate-$\pi$-RAM}, we exploit the fact that
$\depth' \in \mathcal O (\log n)$. It follows easily from the following lemma:
\begin{lemma}
\label{pi' cube}
If $A=\pi_w$, then $\father'^{(3)}[i] < \frac{\father'[i]}{2}$ for all $i$.
\end{lemma}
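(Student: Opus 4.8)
The plan is to unwind the definition of $\father'$ in terms of borders and exploit the classical fact that if a word $u$ has two borders of lengths $p > q$ with $p \le |u|/2$, then $q \le p/2$ is not automatic, but the relevant ``$2$-out-of-$3$'' phenomenon does hold once we phrase everything through periods. Concretely, recall that $\father'[i] = \pi'_w[i-1]+1$, and $\pi'_w[i-1]$ is the length of the longest border $v$ of $w[1\twodots i-1]$ with $w[|v|+1]\neq w[i]$. So the chain $i, \father'[i], \father'^{(2)}[i], \dots$ corresponds to a strictly decreasing sequence of border-lengths $b_0 = i-1 > b_1 > b_2 > \dots$ of $w[1\twodots i-1]$ (shifted by $1$), each of which is the longest border whose ``next letter'' differs from the next letter of the previous one in the chain. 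The first step is to make this precise: set $b_0 := i-1$ and $b_j := \father'^{(j)}[i]-1$ for $j\geq 1$, so that $b_j$ is a border-length of $w[1\twodots b_{j-1}]$, and moreover $w[b_j+1]\neq w[b_{j-1}+1]$ for $j \geq 1$, while $w[b_{j}+1] = w[b_{j-1}'+1]$ for every border-length $b_{j-1}'$ strictly between $b_j$ and $b_{j-1}$.

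The second, and main, step is the combinatorial heart: show $b_3 < b_1/2$ (equivalently $\father'^{(3)}[i] - 1 < (\father'[i]-1)/2$, which gives the claimed $\father'^{(3)}[i] < \father'[i]/2$ after adjusting the off-by-one, using $\father'^{(3)}[i]\geq 1$). Suppose for contradiction $b_3 \geq b_1/2$, so all of $b_1 > b_2 > b_3$ lie in the range $[b_1/2, b_1)$, and $b_1, b_2, b_3$ are border-lengths of $w[1\twodots b_0]$ (hence of $w[1\twodots b_1]$ for the latter two, by the transitivity-of-borders fact stated in the Preliminaries). Two borders of $w[1\twodots b_1]$ of lengths $b_2, b_3$ with $b_2, b_3 \geq b_1/2$ mean the word $w[1\twodots b_1]$ has periods $b_1 - b_2$ and $b_1 - b_3$, both $\leq b_1/2$; by the Fine–Wilf theorem (since $(b_1-b_2)+(b_1-b_3)\leq b_1$) it has period $\gcd(b_1-b_2, b_1-b_3)$, and one then argues that $w[b_2+1]$ and $w[b_3+1]$ are forced to be equal by this periodicity — more precisely, $b_3+1 \equiv b_2+1 \pmod{b_1-b_2}$-type reasoning places them in the same residue class of the common period, or alternatively $w[b_3+1] = w[b_3+1+(b_1-b_3)] = w[b_1+1-(b_3-b_1+?)]$... the cleanest route is: since $b_2$ is a border-length of $w[1\twodots b_1]$, $w[1\twodots b_1]$ has period $b_1-b_2$, so $w[b_3+1] = w[b_3+1+(b_1-b_2)]$ provided $b_3+1+(b_1-b_2)\leq b_1$, i.e. $b_3 \leq b_2 - 1$, which holds; iterating/combining with the other period collapses $w[b_2+1]$ to $w[b_3+1]$. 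But then $b_3$ is a border-length of $w[1\twodots b_1]$ strictly between (nothing forces it strictly between $b_2$ and $b_1$, we need it between $b_3$ and $b_2$... ) — so instead contradict the \emph{defining maximality of $b_2$ as $\father'$-successor}: from $w[b_3+1]=w[b_2+1]$ and $w[b_2+1]\neq w[b_1+1]$ we would get $w[b_3+1]\neq w[b_1+1]$, so $b_3$ is a candidate border of $w[1\twodots b_1]$ with differing next letter; but $b_2 > b_3$ is also such a candidate, which is fine — the actual contradiction is that by Lemma~\ref{pi is usually pi'} / the skipping argument, $b_2$ was chosen as the \emph{largest} border-length $b$ of $w[1\twodots b_1]$ with $w[b+1]\neq w[b_1+1]$, and all border-lengths between $b_2$ and $b_1$ have next-letter equal to $w[b_1+1]$; combined with the period argument forcing equality among next-letters in $[b_1/2, b_1)$, we contradict $w[b_2+1]\neq w[b_1+1]$ itself.

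I expect the main obstacle to be pinning down exactly which maximality is violated and getting the off-by-one between $\father'$ and border-lengths consistent throughout — the Fine–Wilf input is routine, but translating ``the chain skips no valid border'' into the precise inequality on indices $b_j+1$ modulo the period requires care, and one must handle separately the degenerate cases where $\father'^{(k)}[i]$ hits $0$ (i.e. the chain terminates early), for which the statement holds vacuously or trivially since $\father'^{(3)}[i]\leq 1 \leq \father'[i]/2$ once $\father'[i]\geq 2$, and the cases $\father'[i]\in\{1\}$ are immediate. A secondary subtlety: we only have \emph{two} proper borders in $[b_1/2,b_1)$ guaranteed ($b_2$ and $b_3$), which is exactly why the constant $3$ (not $2$) appears — with only $b_1 > b_2$ one cannot force a $2^{-1}$ drop, so the argument genuinely needs the third level to have two periods available for Fine–Wilf. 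I would double-check that $b_2$ and $b_3$ are indeed both $\geq 1$ and distinct in the contradiction hypothesis, and that the Fine–Wilf hypothesis $(b_1-b_2)+(b_1-b_3) \le b_1$ follows from $b_2, b_3 \ge b_1/2$ (it gives $\le b_1$, the boundary case being admissible in the standard statement when one is willing to also use $\gcd \mid$ both periods).
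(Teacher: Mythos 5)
Your route is genuinely different from the paper's and, once the last step is repaired, cleaner. The paper argues in the $\pi$-tree $T$: it picks a path $i_1=\father[i_2]=\father^{(2)}[i_3]=\dots$ with $\father'^{(3)}[i_\ell]=i_1$, first proves $w[i_2]\neq w[i_1]$ by a separate $\pi'$-maximality contradiction, then applies the periodicity lemma to $w[1\twodots i_3-1]$ using the $\pi$-level periods $n-\pi[n]$ and $n-\pi^{(2)}[n]$, and finally carries the bound back through monotonicity of $\father$ and $\father'$. You instead stay entirely in the $\pi'$-iterate world: with $b_k=\pi'^{(k)}[i-1]=\father'^{(k)}[i]-1$, both $b_2$ and $b_3$ are border-lengths of $w[1\twodots b_1]$, giving the two larger periods $b_1-b_2$ and $b_1-b_3$ directly, so a single Fine--Wilf application on $w[1\twodots b_1]$ suffices. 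This avoids the tree bookkeeping, the auxiliary $w[i_2]\neq w[i_1]$ lemma, and the monotonicity transfer.

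However, as written your proof does not close. After Fine--Wilf forces $w[b_3+1]=w[b_2+1]$, you go hunting for a contradiction at the \emph{top} of the chain and finally claim to ``contradict $w[b_2+1]\neq w[b_1+1]$''. That step fails: the common period is a period of $w[1\twodots b_1]$ only and says nothing about the letter at position $b_1+1$, and the fallback argument (``every border-length in $(b_2,b_1)$ has next letter $w[b_1+1]$, and the period equalises next-letters in $[b_1/2,b_1)$'') needs such a border-length to exist, which is false whenever $\pi[b_1]=b_2$. The contradiction you want is at the \emph{bottom} of the chain and you have already derived it: $b_3=\pi'[b_2]$ means, by the very definition of $\pi'$, that $w[b_3+1]\neq w[b_2+1]$, which directly contradicts the Fine--Wilf output $w[b_3+1]=w[b_2+1]$. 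Replacing your final paragraph with that one line makes the proof correct. Two smaller points you flagged but left open do resolve: (a) the hypothesis to contradict must be $2b_3+1\geq b_1$, i.e.\ $b_3\geq(b_1-1)/2$, not $b_3\geq b_1/2$, in order to match $\father'^{(3)}[i]\geq\father'[i]/2$ after the $+1$ shift; with $q=b_1-b_3\leq(b_1+1)/2$ and $p=b_1-b_2\leq q-1\leq(b_1-1)/2$ one still gets $p+q\leq b_1$, so Fine--Wilf goes through unchanged; (b) you should dispatch the degenerate cases first (the $\father'$-chain reaching $0$ or $1$ before depth $3$) so that $b_2\geq 1$ and $b_3\geq 0$, ensuring $\pi'[b_2]$ is meaningful.
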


\begin{proof}
First observe that if $\pi^{(2)}_w[n] \geq \frac n 2 -1$ then $w[\pi_w[n]+1]=w[\pi_w^{(2)}[n]+1]$. Indeed, both
$n - \pi[n]$ and $n-\pi^2[n]$ are periods of $w[1 \twodots n]$. Since their sum is at most $\frac n 2 + 1 + \frac n 2 = n +1$,
by periodicity lemma also $\gcd(n - \pi[n], n - \pi^{(2)}[n])$
is a period, hence $\pi[n] - \pi^{(2)}[n]$ is a period as well. Therefore
$$
w[\pi_w^{(2)}[n]+1 ] = w[\pi_w^{(2)}[n]+1 + \pi[n] - \pi^{(2)}[n]] = w[\pi_w[n]+1] \enspace .
$$

Now consider a path in $T$ such that $i_{k} = \father[i_{k+1}]$ and $\father'^{(3)}[i_\ell]=i_1$.
We aim at showing that $\father'^{(3)}[i_\ell] < \frac{\father'[i_\ell]}{2}$.

We begin with observation that $w[i_2] \neq w[i_1]$, i.e. that $\pi[i_2] \neq i_1 $, by~\eqref{eq:when pi is pi'}.
Assume the opposite. In particular $\father'[i_2] < i_1$. Let $i = \father'^{(2)}[i_\ell]$.
Then $\father'[i] = i_1$. By definition this means that $\pi'[i-1] = i_1 - 1$, thus $w[1\twodots i_1-1]$ is a border
of $w[1 \twodots i -1]$ and $w[i]\neq w[i_1]$. But $w[1\twodots i_2-1]$ is a border
of $w[1 \twodots i -1]$ as well and $w[i_2] = w[i_1] \neq w[i]$. Since $i > i_2$ it holds that $\pi'[i] \geq i_2$,
contradiction. We conclude that $w[i_2] \neq w[i_1]$.

Take $n = i_3 - 1$ and suppose that $\pi^2[n] \geq \frac {n}{2}-1$.
Then by first paragraph $w[\pi^2[n]+1] = w[\pi[n]+1]$.
But $\pi^2[n]+1 = \father^2[i_3]=i_1$ and $\pi[n]+1 = \father[i_3]=i_2$.
Hence $w[i_2] = w[i_1]$, contradiction. Therefore $\pi^2[n] < \frac {n}{2}-1$.
That is $\pi^{(2)}[i_3-1] < \frac{i_3-1}{2}-1$, hence
$i_1 = \pi^{(2)}[i_3-1] + 1 < \frac{i_3-1}{2}$.

Since both $\father$ and $\father'$ are monotone functions then
$\father^{(2)}[i_3] =i_1$ implies $\father'^{(2)}[i_3] \leq i_1$.
Therefore $\father'[i_\ell] \geq i_3$, since $\father'^{(2)}[i_\ell] = i_1$.
We conclude that 
$$
\father'^{(3)}[i_\ell] = i_1 < \frac{i_3-1}{2} < \frac{\father'[i_\ell]}{2}
$$
which ends the proof. \qed
\end{proof}

\begin{wrapfigure}{r}{0.6\textwidth}
\vspace{-1.25cm}
\begin{codebox}
\Procname{$\proc{Validate-$\pi$-RAM}(A)$}
\zi \If $A[1] = 0$ \kw{then} \Error $A$ not valid at $1$
\zi $\id{Bcand}[1] \gets \emptyset$, $w[1]\gets 1$, $\depth[1] \gets \depth'[1] \gets 1$
\zi \For $i \gets 2$ \To $n$ \Do
\zi		$\depth[i] \gets \depth[\father[i]] +1$, $\depth'[i] \gets \depth'[\father[i]]$
\zi		\If $A[i] = \father[i]$ \kw{then}	$\depth'[i] \gets \depth'[i]+1$
\zi		\If $A[i]=0$ \Then
\zi				$w[i] \gets \id{alph}[i] \gets \id{alph}[\father[i]]+1 $
\zi				$\id{max-alph} \gets \max(\id{max-alph}, \id{alph}[i])$
\zi				$\id{Bcand}[i] \gets \emptyset$
\zi			\Else $j \gets \func{LA}(i,\depth[i] - \depth[A[i]]-1)$
\zi				\If $A[i] \neq \father[j]$ or $\depth'[j] = \depth'[A[i]]$ \Then 
\zi					\Error $A$ not valid at $i$
				\End
\zi				\If $\id{Bcand}[i][\depth'[A[i]]] = 0$ \Then 
\zi					\Error $A$ not valid at $i$
				\End
\zi				$\id{Bcand}[i] \gets \id{Bcand}[\father[i]]$
\zi				$\id{Bcand}[i][\depth'[\father[i]]] \gets 1$
\zi				$\id{Bcand}[i][\depth'[A[\father[i]]]] \gets 0$
\zi				$w[i] \gets w[A[i]]$, $\id{alph}[i] \gets \id{alph}[\father[i]]$
		\End	
	\End
\end{codebox}
\vspace{-1.25cm}
\end{wrapfigure}
By Lemma~\ref{pi is usually pi'}, either $A[i]=\father[i]$ or $A[i]$ is
$i$'s ancestor in $T'$. Using these observations, $\depth'$ values of
all valid candidates for $\pi[i]$, except $\father[i]$, can be stored at vertex
$i$ instead of the candidates themselves.
Vertex $i$ stores them encoded in a bit vector $\id{Bcand}[i]$. The $j$-th bit
of $\id{Bcand}[i]$ is set to $1$ if
there is a valid candidate $i'$ for $\pi[i]$ with $\depth'[i']=j$, and $0$
otherwise. The depths in $T'$ can be encoded in a bit vector using only 
a constant number of machine words. To validate $A[i]$, we check if
$\depth'[A[i]]$-th bit of $\id{Bcand}[i]$ is set to $1$ and perform two further
tests: first, we check if $A[i]$ is an ancestor of $i$ in $T$, and then if
there is a valid candidate $j$ for $\pi[i]$ among the ancestors of $i$ in $T$
such that $\depth'[i']=\depth'[A[i]]$. Clearly, $A[i]$ is valid if and only if
all three tests are successful.

To perform these tests efficiently, we use a data
structure~\cite{Alstrup00improvedalgorithms}, working in RAM model, that
supports the \emph{level ancestor query} in any tree:
$\func{LA}(i,\Delta)$ returns the ancestor $j$ of $i$ that is $\Delta$ levels
above $i$. The data structure also supports addition of new leaves and takes
only constant time per any of these two operations. To use this data structure,
each node needs two additional fields, $\depth$ and $\depth'$.

\begin{theorem}\label{RAM correctness}
\proc{Validate-$\pi$-RAM} works in linear time with constant delay. It uses
linear number of machine words.
\end{theorem}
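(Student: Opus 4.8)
The plan is to verify three claims about \proc{Validate-$\pi$-RAM}: that after reading $A[1\twodots i]$ it raises ``$A$ not valid at $i$'' exactly when no word $w$ of length $i$ satisfies $\pi_w=A[1\twodots i]$, that each pass of the main loop runs in constant time, and that the whole structure occupies $\mathcal O(n)$ machine words. Correctness will be reduced to the known characterisation of valid $\pi$-arrays by conditions~\eqref{A is 0}--\eqref{different_letters}, equivalently to the membership test ``$A[i]\in\id{cand}[i]$'' with $\id{cand}$ defined by the recursion~\eqref{eq:candidates}; the contribution of the RAM algorithm is that it performs this test in constant time on a compressed image of $\id{cand}[i]$.

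For correctness I would maintain the invariant that, after position $i$ is processed, $\id{Bcand}[i]$ --- read as a bit vector whose $j$-th bit tells whether $\pi[i]$ has a valid candidate at depth $j$ in $T'$ --- encodes exactly the nonzero part of $\id{cand}[i]$. That this encoding is lossless follows from Lemma~\ref{pi is usually pi'}: every nonzero candidate equals $\father[i]$ or is a $T'$-ancestor of $i$, so with~\eqref{jak wyglada tatus'} the members of $\id{cand}[i]\setminus\{0\}$ have pairwise distinct $\depth'$-values and at most one candidate occupies each depth. I would then check that the updates (copy $\id{Bcand}[\father[i]]$ into $\id{Bcand}[i]$, set the bit at $\depth'[\father[i]]$, clear the bit at $\depth'[A[\father[i]]]$) realise one step of~\eqref{eq:candidates}, so the invariant propagates. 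Finally, when $A[i]\neq 0$, the three tests must be shown jointly equivalent to $A[i]\in\id{cand}[i]$: the bit test certifies that some valid candidate sits at depth $\depth'[A[i]]$; the level-ancestor query $j=\func{LA}(i,\depth[i]-\depth[A[i]]-1)$ returns the vertex one level below $A[i]$ on the $i$-to-root path of $T$, and $A[i]=\father[j]$ together with $\depth'[j]\neq\depth'[A[i]]$ certify respectively that $A[i]$ genuinely is a $T$-ancestor of $i$ (the ancestry requirement in~\eqref{A is 0}) and that $A[i]$ is not excluded by~\eqref{different_letters} at the level immediately above it --- while the remaining levels of~\eqref{different_letters} are already ruled out by the clearings propagated into $\id{Bcand}$. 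Together with the invariant and the distinct-depths property this is membership in $\id{cand}[i]$. The branch $A[i]=0$ is just the test $A[1]=0$ from~\eqref{A is 0} plus the assignment of a fresh letter, which incidentally builds a word over the minimal alphabet (tracked in $\id{alph}$), exactly as in \proc{Validate}~\cite{DuvalLecroqLefebvredlugie}.

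For the resource bounds the decisive tool is Lemma~\ref{pi' cube}: along any path of $T'$ the index drops by more than a factor of $2$ every three steps, hence $\depth'\in\mathcal O(\log n)$. Consequently each $\id{Bcand}[i]$ fits into $\mathcal O(1)$ machine words (the word size being $\Theta(\log n)$), so copying it and reading or flipping a single bit cost $\mathcal O(1)$; by~\cite{Alstrup00improvedalgorithms} each level-ancestor query and each leaf insertion costs $\mathcal O(1)$; the rest of a loop pass is a bounded number of RAM instructions. This yields constant delay, and summing over the $n$ positions plus the $\mathcal O(n)$ initialisation gives $\mathcal O(n)$ total time. For space, the level-ancestor structure is linear, the arrays $\depth,\depth',w,\id{alph}$ take $\mathcal O(n)$ words, and the $n$ vectors $\id{Bcand}[\cdot]$ take $\mathcal O(1)$ words each, so $\mathcal O(n)$ machine words in all.

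The step I expect to be the main obstacle is the correctness argument, and within it the claim that distinct valid candidates for $\pi[i]$ never collide in $\depth'$ (so that a one-bit lookup is a faithful membership test) and that the $\func{LA}$-based tests plug exactly the gap that the recursive bit-vector bookkeeping leaves in~\eqref{different_letters}. A secondary nuisance is the boundary cases: $A[i]=\father[i]$ degenerates the query to $j=\func{LA}(i,0)=i$ and has to be seen to pass all tests; one must ensure $\depth[i]-\depth[A[i]]-1\ge 0$ whenever the query is issued (equivalently, reject early when $A[i]$ is too deep to be a $T$-ancestor of $i$); and one must settle the order in which $\id{Bcand}[i]$ is consulted for the test and then written for later iterations.
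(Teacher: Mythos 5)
Your framework matches the paper's: maintain the invariant that $\id{Bcand}[i]$ is the image of $\id{cand}[i]\setminus\{0\}$ under $\depth'$, argue losslessness from Lemma~\ref{pi is usually pi'}, show the three per-position tests recover membership, and get constant delay and $\mathcal O(n)$ words from Lemma~\ref{pi' cube} and constant-time level-ancestor queries. The resource bounds and the propagation of the invariant through the $\id{Bcand}$ update are argued in essentially the same way as the paper.

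The gap is exactly where you flag it, and it is not a small one: you never establish why the three tests jointly decide $A[i]\in\id{cand}[i]$, and your working description of the third test is off. Distinct candidates having distinct $\depth'$ values does not by itself make the one-bit lookup faithful, because the bit vector stores only a \emph{depth} and many non-candidate $T$-ancestors of $i$ share that same $\depth'$ (a whole block of consecutive vertices $j$ on the $i$-to-root path with $A[j]=\father[j]$ sits at one $\depth'$ level). Thus the bit test only certifies that \emph{some} candidate lives at $\depth'[A[i]]$, not that $A[i]$ is it. The paper's core observation, which you don't prove, closes this: by~\eqref{different_letters}, if $j$ is a valid candidate and $j'$ is its child on the path with $\depth'[j']=\depth'[j]$, then $A[j']=j$, so $\pi[i]=j$ would violate~\eqref{different_letters} at $j'$; hence a valid candidate must be the \emph{first} vertex of its $\depth'$-block, i.e.\ the one of largest $T$-depth at that level. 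Since there is one block per $\depth'$ level on the $i$-to-root path, the test $\depth'[j]\neq\depth'[A[i]]$ (which asserts precisely that $A[i]$ is first in its block) together with the bit test pins $A[i]$ down as the unique possible candidate at that level. Your reading of the third test as ``not excluded by~\eqref{different_letters} at the level immediately above, remaining levels handled by the clearings'' mislocates what it certifies; without the first-in-block lemma the clearings do not suffice, because a cleared depth tells you nothing about which vertex at an uncleared depth you are actually looking at. To complete the proof you need to state and prove that combinatorial lemma, which is the substance the paper supplies.
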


\begin{proof}
First of all note that the calculations of $\depth'$ are proper,
as by \eqref{jak wyglada tatus'} and easy induction it holds that
\begin{equation}
\label{jak wyglada d'}
\depth'[i] =
	\begin{cases}
	1 &\text{ if } i=1 \\
	\depth'[\father[i]] &\text{ if } i>1 \text{ and } A[i] = \father[i]\\
	1+ \depth'[\father[i]] &\text{ if } i>1 \text{ and } A[i] \neq \father[i]
	\end{cases}
\end{equation}

Update of the information kept in $\id{Bcand}$ is correct, as it follows
directly the recurrence relation for sets of valid candidates~\eqref{eq:candidates}.

The memory usage is obvious, as only a constant number of machine words per
node is used. The same applies to the running time: only a constant number of
operations per letter of input is performed, and the level ancestor query takes
only constant time. Additional cost of maintaining the data structure for those
queries is only a constant per position read.

Now we inspect the problem of checking whether $A[i]$ is the unique node
among ancestors $j$ of $i$ such that $\depth'[j]=\depth'[A[i]]$
that is a valid candidate for $\pi[i]$.
We show that the valid candidate is the one among those vertices which has the largest depth in $T$.
Consider $j$ that is a valid candidate for $\pi[i]$ and $j'$ such that $\father[j']=j$.
Suppose that $\depth'[j]=\depth'[j']$. Then by \eqref{jak wyglada d'}
$A[j']=j$. But then by \eqref{A_is_an_ancestor} $A[i] \neq j'$
implies $A[i] \neq A[j'] = j$, contradiction. So $\depth[j'] \neq= \depth'[j]$.
Note that by \eqref{jak wyglada d'} the ancestors of $i$ of the same depth in $T'$
are consecutive nodes on the path from $i$ to the root.
We conclude that the valid candidates $j$ for $\pi[i]$ of fixed $\depth'[j]$ is the one
among the ancestors of $i$ of fixed $\depth'$ that has the largest depth in $T$.

Consider the path from $i$ to the root in $T$. It consists of blocks of nodes
such that $\pi[j]=\father[j]$. By~\eqref{jak wyglada d'}, positions in each
block have the same $\depth'$, and $\depth'$ decreases by $1$ when on the
block's end. Suppose that $j=\father[j']$ is a valid candidate and that it is
not the first vertex in its block. By~\eqref{different_letters}, if
$A[i] \neq j'$, then $A[i] \neq A[j']=j$, contradiction. So if $j$ is a valid
candidate, it must be the first vertex in its block.


To prove the correctness, it is enough to show that \proc{Validate-$\pi$-RAM}
correctly recognises whether $A[i]$ is one of $i$'s valid candidates for
$\pi[i]$. First of all, if $A[i]\neq 0$, it is checked whether $A[i]$ is an
ancestor of $i$ in $T$: using level ancestor query the ancestor at the depth of
$A[i] $ is recovered, if it is different than $A[i]$, $A$ is invalid at $i$.
To check whether $A[i]$ is the first vertex in the block of vertices of the
same $\depth'$, recover the previous vertex on the path from $i$ to the root 
and check if it has different $\depth'$. If not, $A$ is rejected. If $A[i]$ is
the first in its block, check whether there is a valid candidate in that block.
If so, then clearly $A[i]$ is the one.
\qed
\end{proof}

Both $\proc{Validate-$\pi$}(A)$ and algorithm
from~\cite{DuvalLecroqLefebvredlugie} use a linear number of machine words,
i.e. $\Theta(n\log n)$ bits. It can be shown that at least $\Omega(n)$ bits
are necessary in the \emph{streaming setting}, i.e. when successive input values
are given one by one and cannot be re-read.


\begin{theorem}\label{streaming bound}
Deterministic streaming verification of $\pi$ or $\pi'$ array of length $n$ requires $\Omega(n)$ bits of memory.
\end{theorem}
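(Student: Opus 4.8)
The plan is to use a standard communication-complexity / fooling-set argument. I would show that a streaming verifier for $\pi$ (or $\pi'$) arrays of length $n$ must, after reading a suitable prefix of the input, have enough state to distinguish $2^{\Omega(n)}$ different ``histories'', which forces $\Omega(n)$ bits of memory. Concretely, I would fix an integer $m = \Theta(n)$ and associate, with every subset $S \subseteq \{1, \dots, m\}$, a word $w_S$ of the form $u\, b\, v_S$ (or more simply a word built from two distinct letters) whose $\pi$-array on its first portion is independent of $S$ but whose later values encode membership queries ``$j \in S$?''. The first portion of $\pi_{w_S}$ is the common \emph{prefix} that the streaming algorithm reads; the state reached after reading it must determine the answer to all subsequent membership tests.

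The key steps, in order, are as follows. First, I would construct a gadget word $P$ over $\{a,b\}$ (for instance a prefix of a Fibonacci-like or a carefully chosen binary word) all of whose $m$ ``long'' borders have pairwise distinct lengths $\ell_1 > \ell_2 > \cdots > \ell_m$; these border lengths will be the ``addresses''. Second, for a subset $S$ I would define an extension so that the array value at some designated position $p_j$ equals $\ell_j$ precisely when $j\in S$, and equals some other valid value (e.g. $\ell_j - 1$, or $0$) when $j \notin S$; crucially each such extension must itself be the $\pi$-array of a genuine word, so I would verify validity of each extension using the characterisation~\eqref{A is 0}--\eqref{different_letters} (each individual membership test only depends on which border is ``revived'', and these choices are independent because the relevant ancestors in $T$ are disjoint). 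Third, I would argue the fooling-set property: if two distinct subsets $S \neq S'$ led the streaming algorithm to the same memory state after reading the common prefix $\pi_{w_S}[1\twodots q]$, then choosing $j$ in the symmetric difference and appending the block of input that tests membership of $j$ would make the algorithm give the same (wrong, for one of them) answer, since one extension is a valid $\pi$-array and the other is not. Hence there are $\geq 2^m$ reachable states and the memory is $\geq m = \Omega(n)$ bits. Finally I would note that the same construction works verbatim for $\pi'$, or invoke the offline bijection between $\pi$ and $\pi'$ arrays to transfer the bound.

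The main obstacle I expect is the gadget construction: I need a single word whose set of border lengths is large (size $\Theta(n)$) and, simultaneously, flexible enough that I can independently ``turn off'' any chosen border $\ell_j$ by a short, locally-valid extension of the array, without the different choices interfering with one another or violating~\eqref{different_letters}. Managing this independence is the delicate part — the borders of a word are nested, so naively the choices are not independent. The resolution is to encode each query in a \emph{separate}, freshly appended position whose father in $T$ is the root (so its only nontrivial candidate comes from a fresh computation of the failure function along the stored prefix), which decouples the tests; I would spell out why, under~\eqref{eq:candidates}, the candidate set at such a position is exactly $\{0\} \cup \{\ell_1, \dots, \ell_m\}$ so that both ``$\ell_j$'' and ``some other element'' are valid and produce arrays of genuine words, while after reading the prefix no information beyond the reachable state is available to the algorithm. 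Once this gadget is in hand, the counting argument itself is routine.
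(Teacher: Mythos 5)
Your high-level plan --- show that the verifier must distinguish $2^{\Omega(n)}$ states --- is the same as the paper's, but your proposed gadget cannot exist, for a reason the paper itself establishes. You want a single ``fresh'' query position whose set of valid candidates for the next $\pi$-value is $\{0\}\cup\{\ell_1,\ldots,\ell_m\}$ with $m=\Theta(n)$. By Lemma~\ref{pi is usually pi'}, any valid value at a position $i$ is one of $0$, $\father[i]$, or $\father'^{(k)}[i]$ for some $k\geq 1$, and Lemma~\ref{pi' cube} bounds the length of the $\father'$-chain by $O(\log n)$; hence the set of valid candidates at any \emph{single} position has size $O(\log n)$, not $\Theta(n)$. (If, as you suggest, the queried position's father in $T$ is literally the root, then by~\eqref{eq:candidates} its candidate set is exactly $\{0,1\}$, which is worse still.) There is also a logical tangle in your fooling-set setup: you state that the ``first portion'' of $\pi_{w_S}$, which the algorithm reads before the query, is independent of $S$ --- but then the verifier trivially reaches the same state for every $S$, and nothing follows from two subsets colliding.

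The paper's argument is more elementary and sidesteps the gadget altogether. Since at every step both $0$ and $\pi[i-1]+1$ are valid next values, there are at least $2^{n/2-1}$ valid $\pi$-arrays of length $n/2$; with only $\tfrac{n}{2}-2$ bits of state, pigeonhole yields two valid prefixes $\pi_1\neq\pi_2$ driving the verifier into the same state. Fixing an index $i$ with $\pi_1[i]<\pi_2[i]$, the paper appends to both prefixes the explicit suffix $0,1,2,\ldots,i,\pi_2[i]+1,0,\ldots,0$. The initial ramp $0,1,\ldots,i$ forces the ancestor chain in $T$ at the probe position to be $i+1,\pi[i]+1,\pi^{(2)}[i]+1,\ldots$, so $\pi_2[i]+1$ is a valid candidate under $\pi_2$ but lies strictly between $i+1$ and $\pi_1[i]+1$ under $\pi_1$, hence is invalid there; the verifier, being in the same state, must misjudge one of the two. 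In effect the paper encodes a $\Theta(n)$-bit subset across $\Theta(n)$ positions with \emph{two} choices each, rather than packing it into one position with $\Theta(n)$ choices. Your plan would need to be reorganized along those lines to close the gap.
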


\begin{proof}
Since $0$ and $\pi[i-1]+1$ are always valid values for $\pi[i]$, then there are at least $2^{n-1}$
$\pi$ arrays of length $n$. Assume that $\frac{n}{2}-2$ bits
are enough. It means that there are two different valid prefixes of $\pi$ arrays of length $n/2$ ($\pi_1$ and $\pi_2$)
after reading which the algorithm is in the same state.
Let $i\leq\frac{n}{2}$ be any index at which those two prefixes
differ: $\pi_1[i]<\pi_2[i]<i$. We append $0,1,2,3,\ldots,i,\pi_2[i]+1,0,0,\ldots,0$ to both of them
($\pi_2[i]+1$ is at position $\frac n 2 + i +2$).
After reading both input sequences the algorithm is in the same state for both of them. But
exactly one of the resulting arrays is valid:
consider $\pi_1[\frac n 2 + i +2]$ --- just before reading $\pi_2[i]+1<i+1$, all possible candidates for
$\pi_1[\frac n 2 + i +2]$ are $i+1,\pi[i]+1,\pi^2[i]+1$, ... .
And $i+1 > \pi_2[i]+1 > \pi_1[i]+1$, thus the sequence is invalid.
What is left to show is that $\pi_2[i]+1$ is a valid candidate for $\pi_2$ at position $\frac n 2 + i +2$.
Since $(\pi_2[i]+1) \neq i+1$ then it should hold that $\pi_2[i]+1 \neq \pi_2[i+1]$, by \eqref{A_is_an_ancestor}.
But $\pi_2[i+1] = 0$ and $\pi_2[i]+1 > 0$.

So, we get a contradiction: the algorithm is in the same state in both cases and thus cannot be correct.
\qed
\end{proof}

Although we do not know if $O(n)$ are enough, we are able to show that the total memory usage
can be reduced to just $O(n\log\log n)$ bits, i.e. a sublinear amount of machine words.
The algorithm remains real time, the delay is still constant and so the total running time
is linear.

In order to reduce the memory usage, we cannot store the values of $\father[i]$ for each $i$, as this may use $\Omega(n \log n )$ bits
(consider text $a^n$). Hence  we store $\father'[i]$. It turns out that there cannot be
too many different large values of $\father'$ and that they can be all stored (using some clever encoding)
in $\mathcal O (n\log\log n)$ bits.

To implement this approach we adapt $\proc{Validate-$\pi$}$ so that it uses $\father'$ instead of $\father$.
Let us take a closer look at $\proc{Validate-$\pi$}(A)$.
Assuming that is has already processed the prefix
$A[1\twodots i-1]$, we know it is a valid $\pi$ array.
Thus we may calculate its corresponding $\pi'$ array, denoted by
$A'[1\twodots i-1]$. By Lemma~\ref{pi is usually pi'} the values
of $A[1],A[2],\ldots,A[i-2]$ are not needed: whenever $A[i]\neq 0$ and $A[i] \neq A[i-1]+1$,
we check whether $A[i]$ is among the ancestors of $i$ in $T'$.
If so, we verify whether no lower ancestor of $i$ in $T'$ was assigned the same letter.

In order to present the details of this algorithm we need to understand the underlying combinatorics of $\pi'$.
The following lemma shows that different large values of $\pi'$ cannot be packed too densely, which allows
us to store information about different positions in a more concise way.

\begin{lemma}\label{few different}
Let $k\geq 0$ and consider a segment of $2^k$ consecutive indices in the $\pi'$ array. At most
$48$ different values from $[2^k,2^{k+1})$ occur in such a segment.
\end{lemma}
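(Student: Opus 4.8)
The plan is to bound, for a fixed scale $k$, the number of distinct values in the range $[2^k,2^{k+1})$ that can appear among $\pi'[p], \pi'[p+1], \ldots, \pi'[p+2^k-1]$. Suppose $\pi'[i]$ and $\pi'[j]$ are two such values with $i<j$ and both lying in $[2^k,2^{k+1})$. The key idea is to exploit the combinatorial structure underlying $\pi'$: by the definition of $\pi'$, $w[1\twodots \pi'[i]]$ is a border of $w[1\twodots i]$, so $i-\pi'[i]$ is a period of $w[1\twodots i]$; moreover the ``strong'' condition $w[\pi'[i]+1]\neq w[i+1]$ rules out that $\pi'[i]+1=\father[i+1]=\pi[i]+1$, i.e. $\pi'[i]$ is a \emph{proper} ancestor in $T'$ of $i+1$ rather than the trivial one. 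I would first translate the statement about the segment of the $\pi'$ array into a statement about the $T'$-ancestors of the indices $i+1$ in the range, and then show that within a dyadic band $[2^k,2^{k+1})$ these ancestors are few.

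The main technical step is a periodicity argument. If $v_1 < v_2$ are two distinct values in $[2^k,2^{k+1})$ attained as $\pi'$ at positions within a window of length $2^k$, then the corresponding prefixes $w[1\twodots v_1]$ and $w[1\twodots v_2]$ are both borders of some prefixes $w[1\twodots i_1]$ and $w[1\twodots i_2]$ with $|i_1-i_2|<2^k$ and $i_t \geq v_t \geq 2^k$. Since $v_1,v_2 \geq 2^k$ and $v_1,v_2 < 2^{k+1}$, the gap $v_2-v_1 < 2^k$. I would argue that if three such values $v_1<v_2<v_3$ all cluster in a band, the associated periods interact via the Fine--Wilf periodicity lemma — much as in the proof of Lemma~\ref{pi' cube}, where the sum of two periods being at most the length forces a gcd period and hence an equality $w[\pi[n]+1]=w[\pi^{(2)}[n]+1]$ contradicting the strong-failure inequality. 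Concretely, I expect to show that along the $T'$-path the values decrease geometrically (this is exactly the content of Lemma~\ref{pi' cube}: $\father'^{(3)}[i] < \father'[i]/2$), so each index $i+1$ in the window contributes at most $O(1)$ of its $T'$-ancestors into the band $[2^k,2^{k+1})$; it remains to bound how many \emph{distinct} band-values arise as we slide $i+1$ over the $2^k$ positions of the window.

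To get the explicit constant $48$, I would combine two counting bounds. First, by Lemma~\ref{pi' cube} each single index has at most three $T'$-ancestors (indeed at most one, up to the $\father'^{(3)}$ halving, after two steps) landing in any fixed dyadic band, which handles the contribution of the ``endpoints''. Second, for the interior: if two indices $i_1+1 < i_2+1$ in the window share a band-value $v$ as a $T'$-ancestor then $w[1\twodots v]$ is a border of both $w[1\twodots i_1]$ and $w[1\twodots i_2]$; and if a third index also has $v$ as a $T'$-ancestor, one gets nested borders of length $\geq 2^k$ inside prefixes of length $< v + 2^k < 3\cdot 2^k$, forcing (via Fine--Wilf) a small period that collapses the strong-failure inequalities. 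Tracking the worst case — roughly, three dyadic sub-scales times a constant for each of the $\father', \father'^{(2)}, \father'^{(3)}$ levels, times a factor $2$ for left/right splitting of the window, times a small slack — yields $48$. The hardest part will be the bookkeeping that turns the periodicity obstruction into the precise bound $48$ rather than merely $O(1)$: one must carefully distinguish the handful of positions where a band-value can be ``born'' as a genuine strong border from the many positions where it is merely inherited along a $T'$-path, and verify that the Fine--Wilf collision argument applies whenever three genuinely distinct births of band-values occur close together. I expect the clean way to organize this is to fix $k$, let $S$ be the set of distinct values in $[2^k,2^{k+1})$ occurring in the segment, pick for each $v\in S$ a witnessing position, sort the witnesses, and show consecutive witnesses with close values are impossible beyond a bounded multiplicity, so $|S|\le 48$.
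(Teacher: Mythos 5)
You correctly identify the Fine--Wilf periodicity approach, but the way you propose to organize it has genuine gaps that would make it fail as stated. The central problem is that three positions $p_1<p_2<p_3$ in a window of length $2^k$ with distinct $\pi'$ values in $[2^k,2^{k+1})$ do \emph{not} automatically yield a periodicity collision: the periods $p_i-\pi'[p_i]$ can range over an interval of width $2^{k+1}$, and the overlap of the corresponding prefixes need not be long enough for Fine--Wilf to apply. The paper fixes this with a two-stage pigeonhole that you omit. It first shrinks the window from length $2^k$ to $2^{k-2}$ and then splits the band $[2^k,2^{k+1})$ into three carefully uneven sub-ranges $[\ell,r)$ chosen so that $r\leq\tfrac{3}{2}\ell-2^{k-2}$; only after this narrowing does it extract \emph{five} positions $p_1<\cdots<p_5$ in one sub-range. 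The observation that the non-extensible-occurrence start points $p_i-\pi'[p_i]+1$ are pairwise distinct, followed by Erd\H{o}s--Szekeres to get a monotone triple among them, then makes the periodicity lemma usable because the constraint $r\leq\tfrac{3}{2}\ell-2^{k-2}$ forces both relevant periods below half the overlap length, in both the increasing and decreasing sub-cases. Your sketch has the ``three positions collide'' intuition but not the narrowing that is required to make it true.

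A second issue is that your appeal to the $T'$-ancestor structure and to Lemma~\ref{pi' cube} ($\father'^{(3)}[i]<\father'[i]/2$) is a red herring for this lemma. Lemma~\ref{pi' cube} is a \emph{vertical} statement about how fast values drop along a single $T'$-path out of one vertex; Lemma~\ref{few different} is a \emph{horizontal} statement about many positions, each of which contributes exactly one value $\pi'[i]$ (not its whole ancestral chain) to the segment. The paper's proof makes no use of $T'$ or of Lemma~\ref{pi' cube}; it works directly from the fact that each $i$ with $\pi'[i]>0$ marks a non-extensible occurrence of the prefix $w[1\twodots\pi'[i]]$. Finally, your derivation of the constant ($3$ sub-scales $\times$ $3$ levels $\times$ $2$ $\times$ slack) is a guess, not an argument. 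In the paper the constant $48$ decomposes as $4\times 3\times 4$: the factor $4$ from subdividing the length-$2^k$ window into length-$2^{k-2}$ pieces, the factor $3$ from splitting the value band into three sub-ranges, and the factor $4$ because $5=(3-1)^2+1$ positions are needed so that Erd\H{o}s--Szekeres guarantees a monotone triple. These factors are forced by the periodicity constraint and the combinatorics, not slack. To repair your proposal you would need to drop the $T'$ framing, introduce the window/band refinement with the $r\leq\tfrac{3}{2}\ell-2^{k-2}$ bound, and carry out both monotonicity cases explicitly.
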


\begin{proof}
First notice that each $i$ such that $\pi'[i]>0$ corresponds to a non-extensible occurrence of prefix
$w[1 \twodots \pi'[i]]$, i.e. $\pi'[i]$ is maximal among $j$ such that
$w[1 \twodots j]$ is a suffix of $w$ but $w[j+1]\neq w[i+1]$.

If $k < 2$ then the claim is trivial. So let $k'=k-2\geq 0$
and assume that there are more than $48$ different values from $[4 \cdot 2^{k'},8 \cdot 2^{k'})$
occurring in a segment of length $2^k$.
Then more than $12$ different values from $[4 \cdot 2^{k'},8 \cdot 2^{k'})$
occur in a segment of length $2^{k'}$.
Split this range into three subranges $[4 \cdot 2^{k'},5 \cdot 2^{k'})$,
$[5 \cdot 2^{k'},6 \cdot 2^{k'})$ and $[6 \cdot 2^{k'},8 \cdot 2^{k'})$.
Hence at least $5$ different values from one of those such subrange $[\ell,r)$ occur in this segment,
for some $\ell,r$. Note that $r\leq\frac{3}{2}\ell-2^{k'}$.
Let these $5$ different values occur at positions $p_1<\ldots <p_5$.
Consider the sequence $p_i-\pi'[p_i]+1$ for $i=1,\ldots,5$: these are the beginnings
of the corresponding non-extensible prefixes.
In particular all these elements are pairwise different.
Each sequence of length $5$ contains a monotone subsequence of length $3$.
We consider the cases of increasing and decreasing sequence separately:
\begin{enumerate}
\item there exists $p_{i_1}<p_{i_2}<p_{i_3}$ in this segment such that
$p_{i_1} - \pi'[p_{i_1}] > p_{i_2} - \pi'[p_{i_2}] > p_{i_3} - \pi'[p_{i_3}]$.
\begin{figure}
\centering
\includegraphics{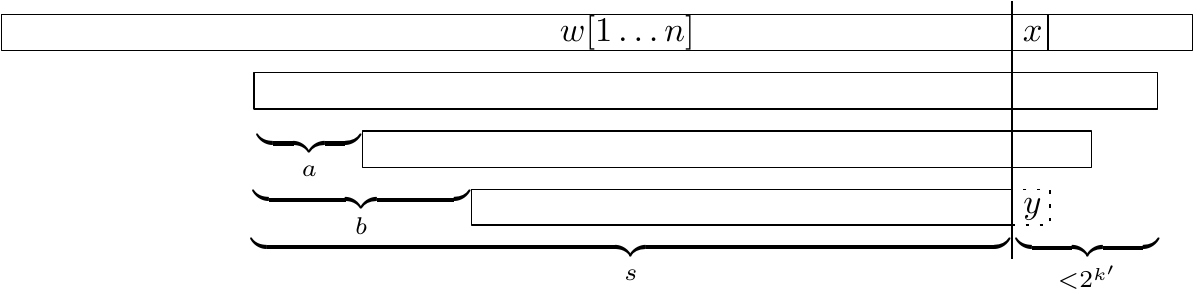}
\caption{Proof of Lemma~\ref{few different}, increasing sequence.}\label{decreasing}
\end{figure}

Both $a$ and $b$ are periods of $w[1\twodots s]$ (see Fig.~\ref{decreasing}).
As $s \geq \ell$ and $a < b\leq r-\ell$,
condition $r\leq\frac{3}{2}\ell-2^{k'}$
ensures that $a,b\leq\frac{s}{2}$. Thus by periodicity lemma $b-a$ is also a period of $w[1\twodots s]$.
But then $w[p_1+1]=w[p_1+1+b-a]$, so $x=y$ making the value of $\pi'[p_1]$ incorrect.

\item there exists $p_1<p_2<p_3$ in this segment such that
$p_{i_1} - \pi'[p_{i_1}] < p_{i_2} - \pi'[p_{i_2}] < p_{i_3} - \pi'[p_{i_3}]$.

\begin{figure}
\centering
\includegraphics{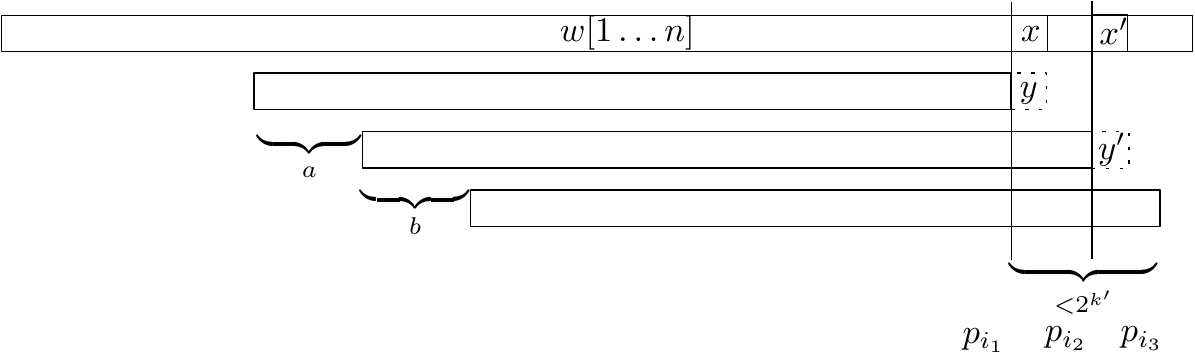}
\caption{Proof of Lemma~\ref{few different}, decreasing sequence.}\label{increasing}
\end{figure}

Let $s_1=\pi'[p_{i_1}]$ and $s_2=\pi'[p_{i_2}]$ (see Fig.~\ref{increasing}). Then $s_i \geq \ell$ by assumption.
Moreover $a + b\leq r-\ell+2^{k'}$, so condition $r\leq\frac{3}{2}\ell-2^k$
ensures that $a + b \leq \ell /2 \leq s_i /2 $ for $i=1,2$.
As $s_1\neq s_2$, there are two subcases:

\begin{enumerate}

\item $s_1<s_2$: then $b$ is a period of $w[1\twodots s_1+1]$ and $w[s_1+1]=w[s_1+1-b]$.
Because $a$ is a period of $w[1\twodots s_1]$ and $a,b\leq \frac{s_1}{2}$, $w[s_1+1-b]=w[s_1+1-b-a]$.
As $b$ is a period of $w[1\twodots s_1+1]$, $w[s_1+1-b-a]=w[s_1+1-a]$.
Thus $x=y$, making the value of $\pi'[p_1]$ incorrect.

\item $s_1>s_2$: similarly $a$ is a period of $w[1\twodots s_2+1]$ and $w[s_2+1]=w[s_2+1-a]$.
Because $b$ is a period of $w[1\twodots s_2]$ and $a,b\leq \frac{s_2}{2}$, it holds that $w[s_2+1-a]=w[s_2+1-a-b]$.
As $a$ is a period of $w[1\twodots s_2+1]$, $w[s_2+1-a-b]=w[s_1+1-b]$.
So $x'=y'$, making the value of $\pi'[p_2]$ incorrect.
\qed
\end{enumerate}
\end{enumerate}
\end{proof}

This observation on the combinatorial property of $\pi'$ allows
us to state the promised algorithm with constant delay and $\mathcal O (n\log\log n)$ bits of memory usage.

\subsubsection*{Organisation of memory}
For each position $x$ we would like to store: $\father'[x]$, $w[x]$,
the list of all its ancestors and a bit vector of flags denoting those of ancestors that were
valid candidates for $\pi[x]$ see Fig.~\ref{vertex_struct}.
This requires $\Theta (b^2(x))$ bits, where $b(x) = \log \father'[x]$, which is too much for us.
Observe that all this information, except $w[x]$, depends solely on $\father'[x]$:
it contains the list of ancestors in $T'$ and the list of valid candidates for $\pi[x]$,
which depends only on the list of candidates for $\pi[\father'[x]]$, by~\eqref{eq:candidates}.
Thus for each position $x$ we store $w[x]$ and $b(x)$ instead of $\father'[x]$.
For technical reasons we also store an $k$ such that $A[x] = \father'^{(k)}[x]$
and a flag denoting whether $A[x]=0$.
Since the alphabet size is $\mathcal O (\log n)$~\cite{countingdistinct}
and the number of ancestors in $T'$ is logarithmic (by Lemma~\ref{pi' cube}),
the memory usage is $O(n\log\log n)$ bits.

\begin{figure}
\vspace{-0.5cm}
\centering
\includegraphics{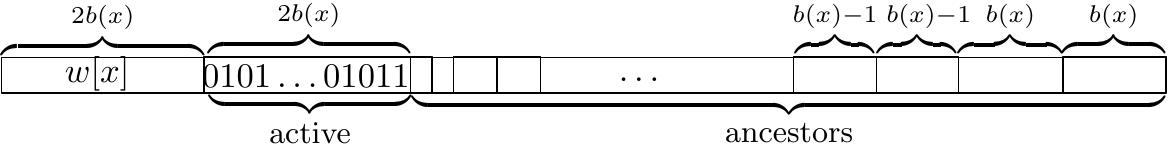}
\caption{Information needed for a single value of $\father'[x]$.}\label{vertex_struct}
\vspace{-0.5cm}
\end{figure}

We now estimate the size of the space needed for a single value of $\father'[x]$.
By Lemma~\ref{pi' cube}, $\father'^{(3)}[x] < \frac{\father'[x]}{2}$.
Hence the binary encodings of at most $2$ ancestors of $x$ in $T'$ have length exactly $k$.
We reserve exactly two chunks of $k$ bits for each possible candidate of length $k = 1 , \ldots , b(x)$. 
Hence the total number of bits associated with a single position $\father'[x]$ is $O(b^2(x))$.

The bit vector is the $\id{Bcand}$ bit vector of \proc{Validate-$\pi$-RAM} with sole exception:
we keep $\id{Bcand'}[x] = \id{Bcand}[x] \setminus \{0, A[x] \}$ instead of $\id{Bcand}[x]$.
It is easy to see that it satisfies the relation:
\begin{equation}
\label{eq:candidates2}
\id{Bcand'}[x] = \begin{cases}
\left\{ x \right\} \cup (\id{Bcand'}[\father'[x]] \setminus \left\{ A[x] \right\}) &\text{ if }A[x]\neq 0\\
\emptyset  &\text{ if }A[i]=0
\end{cases}
\end{equation}
We use the table $\id{Bcand'}[\father'[x]]$
in order to check whether $A[x]$ is a valid candidate for $A[x]$. Clearly $0$ is a valid candidate
and $A[\father'[x]]$ is not.

For each possible value of $\alpha = b(x)$,
we allocate $48\frac{n}{2^\alpha)}$ blocks of $O(\alpha^2)$ bits.
These are called $\alpha$-blocks.
Each group of $48$ consecutive $\alpha$-blocks corresponds to a segment
$[\ell 2^\alpha, (\ell +1 ) 2^\alpha)$ of the input.
A single $\alpha$-block consists of the value of $\father'$ and the information
described in the previous paragraph.
The amount of allocated memory is
$\mathcal O(\sum_\alpha \frac{n}{2^\alpha}\alpha^2)=\mathcal O(n)$ bits.


\subsubsection*{Validating input}
Consider $x$ such that $b(x)=\alpha $ and $x \in [\ell 2^\alpha, (\ell +1 ) 2^\alpha)$.
We want to check if $A[x]$ is a valid candidate for $\pi[x]$: values $A[x-1]+1$ and $0$
are always valid; otherwise $A[x]$ has to be an ancestor of $x$ in $T'$, by Lemma~\ref{pi is usually pi'}.

To retrieve the information associated with position $x$, we first calculate $\alpha = b(x)$ in $\mathcal O(1)$ time.
This gives the offset in memory where $\alpha$-blocks for a segment $[\ell 2^\alpha, (\ell +1 ) 2^\alpha)$ are stored.
Then we look up $\alpha$-block number of $x$, which allows accessing information on ancestors of $x$.
We check if $A[x]$ is among them. There are two positions in the block on which
$A[x]$ may be stored; their distances from the beginning of the record can be calculated using
a constant number of arithmetic operations. If $A[x]$ is one of the ancestors,
we check if it is a valid candidate for $\pi[x]$ by a look-up in the bit vector.
Clearly it takes only $\mathcal O(1)$ time.

Then we store the values $\father[x+1]$ and $\father'[x+1]$, which are needed for $x+1$.
The latter value can be calculated easily using~\eqref{jak wyglada tatus'}.

\subsubsection*{Update}
Suppose we add a new position $x$.
Then we search the corresponding range of $48$ $\alpha$-blocks to see if $\father'[x]$ is already stored.
If $\father'[x]$ is not present, we reserve the next unoccupied block for
$\father'[x]$. The list of ancestors can be recreated from list of ancestors of $\father'^{(2)}[x]$.
The list of valid candidates is the list of valid candidates for $\father'^[(2)][x] $
with the flag for $A[\father'[x]]$ set to $0$, according to~\eqref{eq:candidates2}.
This can be done in constant time,
as for $\father'[x]$ we store the number $k$ such that $\father'^{(k)}[\father'[x]]=A[x]$.
Then we set the flag denoting whether $A[x]=0$, and calculate $k$ such that $A[x] = \father'^{(k)}$.
Both operations can be done in constant time.
\subsubsection*{Running time; lazy copying}
We cannot copy eagerly,
as there might be as much as $\log n$ machine words to be copied.
We use a lazy approach instead, keeping a list of memory regions
(each possibly consisting of many machine words)
that should be copied. After processing each index,
we copy a constant number of words from this list.

Assume that $\alpha b(x)$ words need to be copied for a single index $x$,
and after processing each position $\beta$ words from the list are copied.
Whenever there are many elements on the list, we choose the one corresponding to the smallest value of $b(x)$.
For that we keep a separate list for each possible value of $b(x)$
and a bit vector indicating which lists are non-empty.
To make things working, we need to ensure that information on $x$ is eventually copied, but we cannot
start copying it too early, as it might block copying other information which we will need much sooner.
Therefore we start copying the words associated with $x$ such that $\alpha = b(x)$
and $x \in [\ell \cdot 2^ \alpha, (\ell+1) \cdot 2^ \alpha )$
after processing $(\ell+1) \cdot 2^ \alpha$.
We later show that we finish before copying all those words before processing $(\ell+1) \cdot 2^\alpha$.

Ensuring that information on $x$ is copied, but not too early, is a little tricky.
For each possible value of $b(x)$ we keep another list of waiting memory regions, a bit vector indicating
which of those lists are non-empty, and a bit vector indicating which of those lists should be merged with
the corresponding lists of regions ready to be copied.

After processing an index divisible by $2^{b(x)}$, we would like to move all waiting memory regions from the
corresponding lists to the lists of regions that should be copied. Although moving one list take only
a constant time if we use a simple single-linked implementation, there can be many lists to take care of.
Hence we just mark those non-empty list that should be moved. Then whenever we need to extract an element
with the smallest value of $b(x)$, we look at both lists of elements ready to be
copied and those waiting lists.
If this list is marked as one to be copied, we move all its elements before adding a new one.

We know that the total amount of information to be copied is linear (in terms of code words),
so we have enough time do it. Our concern is that we must
bound the delay between processing an index and copying all its associated information:

\begin{lemma}\label{fast_copy}
All information associated with $x$ such that $\gamma = b(x) $
and $x \in [ \ell 2^ \gamma , (\ell + 1) 2^ \gamma )$
is successfully copied before processing $(\ell+2) 2^ \gamma$.
\end{lemma}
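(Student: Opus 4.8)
The plan is to bound, via a potential/accounting argument, the total amount of copy-work that can be "ahead of" index $x$ in the priority order, and show that the copy budget $\beta$ (a constant number of words per processed index) drains it fast enough. First I would fix the target block: $x$ with $\gamma = b(x)$ lies in the segment $[\ell 2^\gamma, (\ell+1)2^\gamma)$, and by the scheme its associated region (of $O(\gamma^2)$ words, since at most $2$ ancestors of each length $\le b(x)$ are stored, by Lemma~\ref{pi' cube}) is only released onto the ready-to-copy lists after processing index $(\ell+1)2^\gamma$. So the window in which it must be finished is the $2^\gamma$ steps from $(\ell+1)2^\gamma$ to $(\ell+2)2^\gamma$. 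During those $2^\gamma$ steps the algorithm copies $\beta\cdot 2^\gamma$ words, choosing always the pending region with the smallest $b(\cdot)$. Hence the region for $x$ is delayed only by regions with $b(\cdot)\le\gamma$ that were released no later than it.

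Next I would bound how many such "small" regions can be pending. For each value $\alpha\le\gamma$, a region with $b(\cdot)=\alpha$ is released only after processing a multiple of $2^\alpha$, and by Lemma~\ref{few different} each segment of $2^\alpha$ consecutive indices contains at most $48$ distinct values in $[2^\alpha,2^{\alpha+1})$ — equivalently at most $48$ distinct values of $\father'$ with $b(\cdot)=\alpha$ per $\alpha$-segment, i.e. at most $48$ $\alpha$-blocks to copy per segment, each of size $O(\alpha^2)$ words. Up to the moment just after processing $(\ell+1)2^\gamma$, the number of complete $\alpha$-segments is $(\ell+1)2^\gamma/2^\alpha = (\ell+1)2^{\gamma-\alpha}$, but crucially all segments ending at or before index $\ell 2^\gamma$ had an entire $2^\gamma$-step interval in which to be drained; a standard steady-state argument (or induction on $\ell$) lets me assume that at the start of the window only the regions released in the last few segments of each level are still outstanding. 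Then the work still pending at level $\alpha$ that can precede $x$ is $O(2^{\gamma-\alpha}\cdot 48\cdot\alpha^2)$ words, and summing over $\alpha=1,\dots,\gamma$ gives $O\bigl(2^\gamma\sum_{\alpha\le\gamma}\alpha^2/2^{\alpha}\bigr)=O(2^\gamma)$ words, which is drained in $O(2^\gamma)$ steps with constant budget $\beta$; choosing $\beta$ a suitable constant makes this at most $2^\gamma$ steps, so copying finishes before processing $(\ell+2)2^\gamma$.

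The clean way to run this rigorously is induction on $\ell$ with the invariant: \emph{after processing index $(\ell+1)2^\gamma$, for every $\alpha\le\gamma$ all level-$\alpha$ regions released before processing $\ell 2^\gamma$ have been copied} (so only "current-segment" regions remain), together with the symmetric statement across all levels. The base case is trivial, and the inductive step amounts to the counting bound above: in the $2^\gamma$ steps of one window the budget $\beta 2^\gamma$ exceeds the total freshly-released volume $\sum_{\alpha\le\gamma} 2^{\gamma-\alpha}\cdot 48 \cdot O(\alpha^2)=O(2^\gamma)$ plus any carryover, and the smallest-$b$ priority guarantees that the still-outstanding items after the window are exactly the last-segment ones. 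I would also need to check that the "marked list" bookkeeping — lazily merging a waiting list into the ready list only when an extraction touches it — does not add more than $O(1)$ amortized overhead per step, which follows because each list is merged at most once per $2^{b}$ indices and the bit-vector scan for the least nonempty list is $O(1)$ in the RAM model.

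The main obstacle is the accounting across \emph{all} levels simultaneously: a region for $x$ at level $\gamma$ is never delayed by levels $>\gamma$, but I must still ensure that servicing the large backlog of low levels first does not postpone $x$ beyond its window — this is exactly why the release rule (start copying a segment's regions only after the segment is fully read) is needed, and why Lemma~\ref{few different} (only $48$ distinct large values per segment) is the crux: without the $O(1)$-per-segment bound the released volume per window would be $\omega(2^\gamma)$ and the delay unbounded. Getting the geometric sum $\sum_\alpha \alpha^2 2^{-\alpha}=O(1)$ to absorb into a single constant $\beta$, uniformly in $\gamma$, is the one place where the constants must be tracked, but it is routine once the invariant is set up.
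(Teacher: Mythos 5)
Your proposal is correct and takes essentially the same approach as the paper: a constant per-step copy budget, smallest-level-first priority, and a geometric bound on the pending copy volume by level that rests on Lemma~\ref{few different}. The paper packages the same computation as the recursion $c_\gamma = 2c_{\gamma-1}+48\alpha\gamma$ for the credit consumed by \proc{Copy-$0$},\dots,\proc{Copy-$\gamma$} on a window of length $2^\gamma$, which is just the unrolled form of your direct sum $\sum_{\alpha\le\gamma}2^{\gamma-\alpha}\cdot 48\cdot(\text{chunk size})$; one minor slip is that a level-$\alpha$ block occupies $O(\alpha^2)$ bits and hence $O(\alpha)$ machine words, not $O(\alpha^2)$ words, but this only tightens your already-convergent bound.
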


\begin{proof}

We know that the total number of machine words to be copied is linear,
so there is enough time do it. The delay is our only concern.

We call the memory chunks for $x$'s such that $b(x)=\gamma$ the \emph{$\gamma$-chunks}.
Suppose that there are several procedures responsible for copying memory chunks:
procedure \proc{Copy-$\gamma$} is responsible for copying $\gamma$-chunks.
After processing each position we copy $\alpha \beta$ memory words,
where $\beta$ is an appropriate constant which we will calculate later,
and there are $\alpha \gamma$ machine words to be copied for $x$.
Imagine we are given $\alpha \beta$ \emph{credit} after each position;
note though that this is a worst-case analysis, not an amortised one.

We run the procedure \proc{Copy-$\gamma$}, where $\gamma$ is the smallest value such that
some $\gamma$-chunks are to be copied. If \proc{Copy-$\gamma$} did not use all the credit,
we repeat the process (for larger $\gamma$) until we run out of it.

Consider the procedure \proc{Copy-$\gamma$} and the interval $[\ell 2^{\gamma},(\ell+1) 2^{\gamma})$.
All the information to be copied while this interval is read is ready before we process
position $\ell 2^{\gamma}$. Since \proc{Copy-$\gamma$} can use all the credit for this interval
that was not used by \proc{Copy-$(\gamma-1)$}, \ldots , \proc{Copy-$0$},
we subtract an upper bound on the credit used by them from the $\alpha \beta 2^\gamma$ credit
we are given for processing this interval.

Let $c_\gamma$ be the maximal credit used by \proc{Copy-$0$}, \ldots , \proc{Copy-$(\gamma)$}
on an interval of length $2^\gamma$ assuming they do not run out of credit.
Then $\alpha \beta 2^\gamma - 2c_{\gamma -1}$ is the credit available to \proc{Copy-$(\gamma)$} on this interval:
the credit used by \proc{Copy-$0$}, \ldots , \proc{Copy-$(\gamma-1)$} on interval of length $2^\gamma$
consists of credit used by them on two intervals of length $2^{\gamma-1}$. On the other hand the credit released is
$\alpha \beta 2^\gamma$. 

We give a recursive formula for $c_\gamma$. We can assume that $c_0=1$.
Then by Lemma~\ref{few different} \proc{Copy-$\gamma$} copies at most
$48$ records, each of them consisting of at most $\alpha \gamma$ machine words.
Let us upper bound the credit used by \proc{Copy-$0$}, \ldots , \proc{Copy-$(\gamma-1)$} on an interval of length $2^\gamma$.
Divide this interval into two sub-intervals of length $2^{\gamma-1}$.
Then by definition on each of them \proc{Copy-$0$}, \ldots , \proc{Copy-$\gamma$} used at most $c_{\gamma-1}$ of credit and so

$$
c_\gamma = 2 c_{\gamma-1} + 48 \alpha \gamma:
$$

By standard techniques, $c_\gamma = \mathcal O (2^\gamma)$. Let $\beta$ be such that $c_\gamma \leq \beta 2^\gamma$.
We now show by induction on $\gamma$ that 
$48 \alpha \gamma \leq \alpha \beta 2^\gamma - 2c_{\gamma -1}$,
i.e. the credit available to \proc{Copy-$(\gamma)$} is enough to pay for the copying it should perform.
It trivially holds for $\gamma = 0$. 
Now consider $\gamma +1$:
$$
\alpha \beta 2^{\gamma+1} - 2c_{\gamma} =
\alpha \beta 2^{\gamma+1} - c_{\beta+1} + 48 \alpha \gamma \geq 48 \alpha \gamma \enspace .
$$
So there is enough time to copy all the required information.
\qed
\end{proof}

As a consequence, each list contains at most $48$ elements, so the total size of the memory required
to perform lazy copying is just $\mathcal O(\log^2 n)$.

To make use of this lazy copying, we must remember about a few details.
Just before we put pointer to the block of memory corresponding to $x$
on the list of chunks that should be copied, we copy the value $\father'[x]$
and light a special flag meaning that its contents is still processed.
When we want to extract some information about $x$ from the memory,
it might happen that the information is not copied yet.
In such case we look at the block of memory corresponding to its $\father'$.
If it is also not ready yet, we look at its $\father'^{(2)}$, and so on.
Lemma~\ref{pi' cube} guarantees that $\father'^{(5)}[x]<\frac{\father'[x]}{4}<\frac{x}{4}$.
By Lemma~\ref{fast_copy}, the information associated with $\father'^{(5)}[x]$
is copied after processing at most $\frac{x}{2}$ indices, well before considering $x$.
Thus a constant number of lookups is enough to get to an ancestor
who stores a complete list of his own ancestors.
This gives an algorithm both time optimal and using a sublinear amount of machine words:

\begin{theorem}
$\pi$ array of length $n$ can be validated online in real time using
$O (n \frac{\log \log n}{\log n}) = o(n)$ machine words of $\Theta(\log n)$ bits.
\end{theorem}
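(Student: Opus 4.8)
The plan is to assemble the ingredients already established and verify, in turn, correctness, constant delay, and the $O(n\log\log n)$-bit space bound for the $\father'$-based variant of \proc{Validate-$\pi$-RAM} described above, in which each position $x$ stores only $w[x]$, the value $b(x)=\log\father'[x]$, the index $k$ with $A[x]=\father'^{(k)}[x]$ and a flag for $A[x]=0$, while the data that depends only on $\father'[x]$ (the $T'$-ancestor list and the bit vector $\id{Bcand'}[x]$) is kept once per distinct value of $\father'$ inside the $\alpha$-blocks.

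For correctness I would show that the algorithm maintains exactly the invariant of \proc{Validate-$\pi$-RAM}, narrowed through Lemma~\ref{pi is usually pi'}: when $A[i]\neq0$ and $A[i]\neq\father[i]$, the only possible valid candidate for $\pi[i]$ is a $T'$-ancestor of $i$, so it is enough to test membership of $A[i]$ in the $T'$-ancestor list of $i$ and, if it is present, to consult $\id{Bcand'}[\father'[i]]$ (noting $0$ is always valid and $A[\father'[i]]$ is never valid). Updating $\id{Bcand'}$ follows the recurrence~\eqref{eq:candidates2}, which is the specialisation of~\eqref{eq:candidates} obtained by dropping $0$ and $A[\cdot]$ and replacing $\father$ by $\father'$. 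The subtle point is that lazy copying may mean the record attached to $\father'[x]$ is not yet materialised when $x$ is processed; here I invoke Lemma~\ref{pi' cube} to get $\father'^{(5)}[x]<\father'[x]/4<x/4$, and Lemma~\ref{fast_copy} to guarantee that the record of a $y$ with $y\in[\ell2^{b(y)},(\ell+1)2^{b(y)})$ is complete by the time index $(\ell+2)2^{b(y)}$ is read. Since $(\ell+2)2^{b(y)}\le y+2\father'[y]\le 3y<x$, the record of $\father'^{(5)}[x]$ is already available, so at most five hops along $\father'$ reach an ancestor with a complete ancestor list; together with the values $\father'[x],\dots,\father'^{(5)}[x]$ learned during the hops, this lets one answer the membership and bit-vector queries for $x$ in $O(1)$ time.

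For the delay, processing $x$ costs: computing $b(x)$ and the offset of the group of $48$ $\alpha$-blocks for its segment in $O(1)$; a constant number of $\father'$-hops to a materialised record; a constant-size access to the block of $\father'[x]$ (or allocation of a fresh one, filling in the ancestor list from that of $\father'^{(2)}[x]$ and $\id{Bcand'}$ from~\eqref{eq:candidates2}) and to the at most two positions inside a block at which $A[x]$ may occur, each at a computable offset; one bit-vector look-up; computing and storing $\father[x+1]$ and $\father'[x+1]$ via~\eqref{jak wyglada tatus'}; and a constant amount of lazy-copy work which, by Lemma~\ref{fast_copy}, keeps every record copied within its deadline and every waiting/ready list of length at most $48$. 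Each step is $O(1)$ in the unit-cost RAM model, so the delay is constant and the total running time linear.

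For the space bound, the per-position data occupies $O(\log\log n)$ bits each --- $w[x]$ fits in $O(\log\log n)$ bits because $|\Sigma|=O(\log n)$ by~\cite{countingdistinct}, and $b(x)$, $k$ and the flag fit in $O(\log\log n)$ bits because $b(x)\le\log n$ and, by Lemma~\ref{pi' cube}, the depth of $T'$ is $O(\log n)$ --- so $O(n\log\log n)$ bits in total. The $\alpha$-blocks contribute $\sum_\alpha 48\frac{n}{2^\alpha}\cdot O(\alpha^2)=O(n)$ bits, and the bookkeeping for lazy copying (per value of $b(x)$, one list and two bit vectors, each list of length $\le 48$) contributes $O(\log^2 n)$ bits. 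The total $O(n\log\log n)$ bits is $O\!\left(n\frac{\log\log n}{\log n}\right)=o(n)$ machine words of $\Theta(\log n)$ bits. The main obstacle --- and where essentially all the work lies --- is the delay analysis of lazy copying: one must schedule the copying so that a record is always materialised before it is needed, which is exactly the credit argument of Lemma~\ref{fast_copy}, and that in turn rests on the combinatorial bound of Lemma~\ref{few different} capping at $48$ the number of records with a given $b(x)$ value inside any segment of $2^{b(x)}$ indices.
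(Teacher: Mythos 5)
Your proposal is correct and follows essentially the same route as the paper: per-position data of $O(\log\log n)$ bits backed by shared $\alpha$-blocks indexed by $\father'$, correctness via Lemma~\ref{pi is usually pi'} and recurrence~\eqref{eq:candidates2}, and the lazy-copying deadline argument of Lemma~\ref{fast_copy} combined with Lemma~\ref{pi' cube} to bound the number of $\father'$-hops to a materialised record. The only thing you do slightly differently is the explicit arithmetic $(\ell+2)2^{b(y)}\leq y+2\father'[y]\leq 3y<x$ justifying that $\father'^{(5)}[x]$ is ready in time, which is a harmless tightening of the paper's looser claim.
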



\section{Online strict border array validation}
\label{sec:pi' verification}
\begin{wrapfigure}{r}{0.4\textwidth}
\vspace{-0.8cm}
	\centering
	\includegraphics{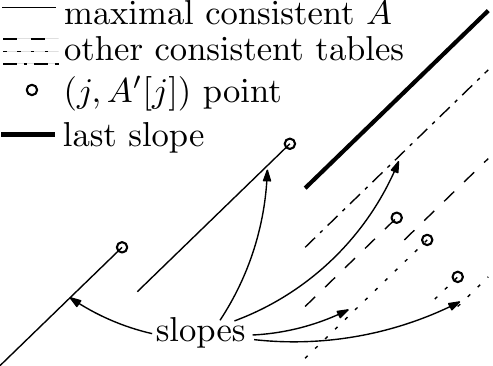}
	\caption{Maximal consistent function.\label{maximal}}
\vspace{-2.5cm}
\end{wrapfigure}
While we already know a simple algorithm \proc{Compute-$\pi$-From-$\pi'$},
this algorithm is not online.
Therefore we need another approach.
In general, we assume that $A'$ is a valid $\pi'_w$ for some word $w$, recover $\pi_w$ out of $A'$
and then run \proc{Validate-$\pi$} on $\pi_w$ to calculate the word $w$ and the minimal size of the required
alphabet. In the following we present algorithm \proc{Validate-$\pi'$} that performs this task.

\subsubsection*{Overview of the algorithm}
Imagine the array $A'$ as the set of points $(i,A'[i])$.
We say that a table $A[1 \twodots n+1]$ is consistent with
$A'[1 \twodots n]$ iff the following two conditions hold
\begin{enumerate}
	\item[\inv1] $A[1\twodots n+1] = \pi_w[1 \twodots n+1]$ for some word $w[1\twodots n+1]$;
	\item[\inv2] $\pi'_w[1\twodots n]= A'[1\twodots n]$;
\end{enumerate}
The algorithm keeps a \emph{maximal function consistent with} $A[1 \twodots n+1]$ for $A'[1 \twodots n]$,
i.e. the one satisfying the condition:
\begin{enumerate}
	\item[\inv3] for every $B[1 \twodots n+1]$ such that $B$ is consistent with $A'[1\twodots n]$
	it holds that $A[j] \geq B[j]$ for $j= 1 , \ldots , n+1$ \enspace ,
\end{enumerate}
see Fig.~\ref{maximal}.
Note that after reading $n$ input symbols we recover $A[1\twodots n+1]$.
To express shortly that $A$ is a maximal candidate we use a notation $A[1 \twodots m] \geq B[1 \twodots m]$
to denote that $A[j] \geq A[j]$ for $j = 1 , \ldots ,m$.
The invariant of the algorithm is that the computed function $A$ satisfies conditions \inv1--\inv3.

We think of $A$ as a collection of maximal \emph{slopes}:
a set of indices $i,i+1, \ldots, i+j$ is a slope if $A[i+k] = A[i]+k$ for $k = 1, \ldots, j$.
Note that, by~\eqref{eq:when pi is pi'}, $A[i+j+1] \neq A[i+j]+1$ implies that 
$A[i+j]=A'[i+j]$.
When a new letter is read, we have to update $A$
or claim that $A'$ is invalid. It turns out that only the last slope has to be updated.

It can be shown that \inv1--\inv3 imply a stronger property, which is essential
for our analysis.
\begin{lemma}
\label{inv4}
Let $A[1\twodots n+1]$ be a maximal function consistent with $A'[1\twodots n]$ and $B[1\twodots n+1]$
be consistent with $A'[1 \twodots n]$. Let $i$ be the first position
of the last slope of $A$. Then $A[1 \twodots i-1] = B[1 \twodots i-1]$.
\end{lemma}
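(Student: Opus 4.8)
The plan is to prove Lemma~\ref{inv4} by exploiting maximality (\inv3) together with the structural fact that the values of $A$ before the last slope are ``locked in'' by the prefixes of $A'$ already read. First I would set $i$ to be the first position of the last slope of $A$, and observe that $\inv3$ already gives $A[j]\ge B[j]$ for all $j$, so it suffices to prove the reverse inequality $A[j]\le B[j]$ for $j=1,\dots,i-1$. Suppose for contradiction that $A[j]>B[j]$ for some $j<i$, and take the largest such $j$. Since $j$ is not on the last slope, either $A[j+1]\neq A[j]+1$ (i.e.\ $j$ is the last index of a slope), or $j+1<i$ and by maximality of $j$ we have $A[j+1]=B[j+1]$; in the latter case $A[j]+1 = A[j+1] = B[j+1] \le B[j]+1$ by \eqref{pi_slow_rise}-type monotonicity of $\pi_w$, contradicting $A[j]>B[j]$. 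So $j$ is the last index of a slope of $A$, hence by the remark preceding the lemma (a consequence of~\eqref{eq:when pi is pi'}) $A[j]=A'[j]$.

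Now I would use \inv1 and \inv2 for $A$ versus those for $B$. We have a word $w_A$ with $\pi_{w_A}=A$ and $\pi'_{w_A}[1\twodots n]=A'[1\twodots n]$, and a word $w_B$ with $\pi_{w_B}=B$ and $\pi'_{w_B}[1\twodots n]=A'[1\twodots n]$. The key point is that $A'[j]$, together with the constraint that it is a \emph{strong} border value, forces $A[j+1]$: by~\eqref{eq:when pi is not pi'} and the definition of $\pi'$, since $A'[j]=A[j]$ we are in the case $\pi_{w_A}[j+1]<\pi_{w_A}[j]+1$, i.e.\ $A[j+1]\le A[j]$, and in fact $A[j+1]=A^{(k)}[j]$ for the first $k\ge 1$ with $w_A[A^{(k)}[j]+1]=w_A[j+1]$ or $A^{(k)}[j]=0$. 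The same analysis applies verbatim to $B$ at position $j+1$ with the same value $A'[j]=B[j]$ for $\pi'_{w_B}[j]$ --- wait, this is exactly where the assumed inequality $A[j]>B[j]$ must be leveraged, not the equality; so instead I would argue that $A'[j]\le B[j]$ (from $\inv3$ applied within the relation $A'[j]\le A[j]$ and $A[j]>B[j]$ being impossible once we know $A[j]=A'[j]$, since $A'[j]$ is determined by the input and $B$ is consistent, giving $B[j]\ge \pi'_{w_B}$-border length $= A'[j] = A[j]$).

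Hence the main obstacle, and the place I would spend the most care, is showing that the last index of a slope of $A$ is pinned \emph{exactly}, not just from below: the maximality of $A$ forces $A[j]$ to equal $A'[j]$ there, and $A'[j]$ is part of the already-read input, so every consistent $B$ must also satisfy $B[j]\ge A'[j]$ (its $\pi'_{w_B}[j]$ equals $A'[j]$, and the length of the longest border of $w_B[1\twodots j]$ is at least the length of that particular strong border); combined with $B[j]\le A[j]=A'[j]$ from $\inv3$, we get $B[j]=A[j]$, contradicting the choice of $j$. Then I would push this downward: with $A[j]=B[j]$ for the last index $j$ of each slope up to $i-1$, monotonicity $\pi_w[t+1]\le\pi_w[t]+1$ and the slope structure (on a slope $A[i+k]=A[i]+k$) propagate equality to every index of every slope preceding the last, since both $A$ and $B$ restricted to $[1\twodots i-1]$ are determined by their slope endpoints and those must coincide. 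This yields $A[1\twodots i-1]=B[1\twodots i-1]$, completing the proof. \qed
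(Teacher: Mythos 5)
Your proof is correct and rests on the same key observation as the paper's: at the endpoint $j$ of a slope, maximality forces $A[j]=A'[j]$, and any consistent $B$ satisfies $B[j]\ge\pi'_{w_B}[j]=A'[j]$, so together with \inv3 the sandwich $A'[j]\le B[j]\le A[j]=A'[j]$ pins $B[j]=A[j]$. The paper establishes this only at $j=i-1$ and then invokes the backward recursion $\pi[t]=\max\{\pi'[t],\pi[t+1]-1\}$ of \proc{Compute-$\pi$-From-$\pi'$}, which reconstructs $A[1\twodots i-1]$ uniquely from $A'[1\twodots i-1]$ and $A[i-1]$, thereby giving $A[1\twodots i-1]=B[1\twodots i-1]$ in a single stroke. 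You instead run a largest-counterexample argument, handling on-slope positions via monotonicity and slope endpoints via the sandwich; this amounts to re-deriving by hand the determinism of the backward recursion, so the two routes are essentially equivalent. One caution: your middle paragraph visibly backtracks (``wait, this is exactly where\ldots'') and should not appear in a final write-up --- but your closing paragraph does land cleanly on the correct argument.
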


\begin{proof}
If there is only one slope, there is nothing to prove. If there are more, consider $i-1$
--- the last element on the second to the last slope.
Since this is the end of a slope, then by~\eqref{eq:when pi is pi'} $A'[i-1] = A[i-1]$.
On the other hand, consider $B[1 \twodots n+1]$ as in the statement of the lemma.
By \inv3 it holds that $A[i-1] \geq B[i-1]$. Thus
$$
A'[i-1] \leq B[i-1] \leq A[i-1] = A'[i-1] \enspace ,
$$
hence $B[i-1] = A[i-1]$. Let $B[1 \twodots n+1] = \pi_{w'}[1 \twodots n+1]$.
Using \proc{Compute-$\pi$-From-$\pi'$} we can uniquely recover $\pi_{w'}[1\twodots i-1]$
from $\pi_{w'}'[1\twodots i-1]$ and $\pi_{w'}[i-1]$.
But those values are the same for $A[1 \twodots i-1]$, thus 
$A[1\twodots i-1] = \pi_{w'}[1\twodots i-1]$.
\qed
\end{proof}

\begin{wrapfigure}{r}{0.4\textwidth}
	\vspace{-0.8cm}
	\centering
	\includegraphics{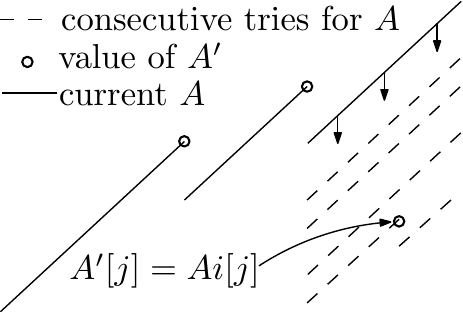}
	\caption{Candidate decreasing.\label{fig:adjustinglastslope}}
\vspace{-0.9cm}
\end{wrapfigure}
When new value $A'[n]$ is read, it may happen that $A[1\twodots n]$ does not satisfy
\inv2: by \eqref{eq:when pi is pi'}--\eqref{eq:when pi is not pi'}
$A'[n]=A[n]$ or $A'[j]=A'[A[n]]$ holds if $A[1 \twodots n+1]$ is consistent with $A'[1 \twodots n]$.
Then we adjust the values of $A$ on the last slope.
Suppose that there is some other valid candidate function $A_1[1 \twodots n+1]$.
Since by invariant \inv3 for $j = 1 , \ldots , n$ it holds that $A[1 \twodots n] \geq A_1[1 \twodots n]$
and for $j \in [1 \twodots n]$ it holds that $A[j] > A_1[j]$.
In order to replace $A$ by some valid candidate
we have to decrease it at some point, see Fig.~\ref{fig:adjustinglastslope}.
Let $i$ denote the beginning of the last slope of $A$.
By Lemma~\ref{inv4} $A[1 \twodots i-1] = A_1[1 \twodots i-1]$.
So in each step we decrease the value of $A[i]$ by the smallest offset,
so that $A[1 \twodots n] \geq A_1[1 \twodots n]$ is kept.
It can happen that at some index $j$ it holds that $A'[j]>A[j]$. Then $A_1 < A'[j]$
and as $A_1$ is chosen arbitrarily, 
$A'$ is invalid.
On the other hand it may turn out that $A'[j]=A[j]$.
In such case $A_1[j]=A[j]$ 
and we shorten the last slope:
by~\eqref{eq:when pi is pi'} $A'[j]=A[j]$ implies $A[j+1] < A[j]+1$.

\subsubsection*{Information stored}
The algorithm stores:
\begin{itemize}
	\item the input read so far, i.e. a prefix of $A'$,
	\item suffix tree for $A'$, created online \cite{McCreight,Ukkonen},
	\item number $n$ denoting the number of read values of $A'$,
	\item first position $i$ on the last slope,
	\item table $A[1 \twodots i-1]$, those values are fixed,
	\item candidate value $A[i]$, it may be changed later.
\end{itemize}
The algorithm also uses implicit values of $A[i+j]=A[i]+j$ for $j=1, \ldots , n-i+1$.
These values do not need to be stored in the memory.


\proc{Validate} is run for $A\pi[1 \twodots i-1]$ (or $A[1 \twodots i]$, if $A[i]=0$),
i.e. on values of $A$ not changed later.
Since by \inv1 $A$ is a valid border array \proc{Validate} cannot call an error. 
It is run in order to calculate the minimal size of the alphabet,
letters of the word $w$ and a set of valid candidate for $\pi[i]$.
Note that even valid candidates for $n+1$ can be calculated as by \eqref{eq:candidates}
set of candidates for $\pi[i]$ is expressed in terms of candidates of $\father[i]<i$.
Moreover, since $A$ is fixed for $\father[i]$, the set of candidates for $i$ is calculated just once.
Since \proc{Validate} is an online algorithm,
we may feed it with values of $A$ as soon as they are stored.

\subsubsection*{Update and adjusting the last slope}
When next value $A'[n]$ is read we check whether $A'[n]=A'[A[n]]$.
Otherwise $A$ ceased to be a valid border table or $i$ last slope
is defined improperly. Hence we adjust $A$ on the current last slope.

When adjusting the last slope we aim at satisfying two conditions
\begin{equation}
A'[j] < A[j] \label{A jest mniejsze niz A'} \text{ and } 
A'[j] = A'[A[j]]  \enspace , 
\end{equation}
for each $j \in [i \twodots n]$. These conditions are checked by two queries:
the \emph{height-query} returns the smallest $j \in [i \twodots n]$ such that
$A'[j] \geq A[j]$; the \emph{value-query} answers whether
$A'[i \twodots n] = A'[A[i] \twodots A[i] + (n-i)]$.
Note that the second query is just a short way of asking whether for $j \in [i \twodots n]$
the second condition of \eqref{A jest mniejsze niz A'} holds, as $A[j] = A[i] + (j-i)$.

We ask both queries until the height query returns no index and value query returns {\bfseries yes}.If the height-query returns an index $j$ such that $A'[j]> A[j]$, then we
reject the input and call an error. If $A'[j] = A[j]$ then we check (naively) whether
$A'[i\twodots j-1] = A'[A[i] \twodots A[i] + (j-i-1)]$.
If not, we reject. If it holds then a new slope $[i \twodots j]$ and a new last slope $[j+1 \twodots n]$
are created, we store values $A[i\twodots j]$,
$i$ is set to $j+1$ and we set $A[i]$ to the largest possible candidate value for $\pi[i]$.
Then we continue adjusting.

If value-query answers { \bfseries no} then we set the value of $A[i]$ to the next largest
valid candidate value for $\pi[i]$ and continue with adjusting.
If $A[i] = 0$ then there is no such candidate value and we reject. 
%
\begin{lemma}
\label{niezmienniki pi'}
After reading a valid strong prefix array $A'[1 \twodots n]$ \proc{Validate-$\pi'$}
satisfies conditions \inv1--\inv3. Otherwise \proc{Validate-$\pi'$} rises an error.
\end{lemma}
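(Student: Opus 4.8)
The plan is to argue by induction on $n$, the number of input values read so far, maintaining throughout the stronger claim that the stored data (the fixed prefix $A[1\twodots i-1]$, the current last slope with its candidate value $A[i]$, and the state of the embedded run of $\proc{Validate}$) encodes exactly the pointwise‑maximal function $A[1\twodots n+1]$ consistent with $A'[1\twodots n]$ whenever $A'[1\twodots n]$ is a valid strong prefix array, and that an error has been raised otherwise. The cases $n\le 1$ are immediate. For the step, assume the claim for $A'[1\twodots n-1]$; if an error was already raised there is nothing to do, so suppose $A[1\twodots n]$ is the maximal function consistent with $A'[1\twodots n-1]$ and satisfies $\inv1$--$\inv3$. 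Reading $A'[n]$ imposes, via \eqref{eq:when pi is pi'}--\eqref{eq:when pi is not pi'}, the following dichotomy on any $A[1\twodots n+1]$ that is to be consistent with $A'[1\twodots n]$: at position $n$ either $A'[n]=A[n]$ and $A[n+1]<A[n]+1$, or $A'[n]=A'[A[n]]$ and $A[n+1]=A[n]+1$. The whole proof is an analysis of which branch the algorithm takes and why the outcome is again maximal and consistent.

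First I would treat the easy case $A'[n]=A'[A[n]]$, where the algorithm keeps the last slope and extends it implicitly by the stored rule $A[n+1]=A[n]+1$. Here $\inv1$ follows from $\inv1$ for $A[1\twodots n]$ by appending to the underlying word a symbol equal to the one at position $A[n]+1$; $\inv2$ follows because $\pi'_w[j]$ for $j<n$ depends only on $w[1\twodots j+1]$ and is therefore unchanged, while for $j=n$ we invoke \eqref{eq:when pi is not pi'} to get $\pi'_w[n]=\pi'_w[A[n]]=A'[A[n]]=A'[n]$; and $\inv3$ follows from $\inv3$ for $A[1\twodots n]$ on the positions $1,\dots,n$ together with $B[n+1]\le B[n]+1\le A[n]+1$ for any consistent $B$, using $\pi[n+1]\le\pi[n]+1$. (If $A[n]=0$, so that $A'[A[n]]$ is the sentinel $-1$, this branch is taken exactly when $A'[n]=-1$, which is the consistent value, and otherwise we fall into the second case.)

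The bulk of the work is the case $A'[n]\ne A'[A[n]]$, where $A$ must be decreased somewhere on its last slope $[i\twodots n]$. By Lemma~\ref{inv4}, every $B$ consistent with $A'[1\twodots n-1]$ — hence every $B$ consistent with $A'[1\twodots n]$ — agrees with $A$ on $[1\twodots i-1]$, so only $A[i\twodots n+1]$ has to be recomputed and the fixed prefix together with the already‑finalized slopes still witnesses $\inv1$ and $\inv2$. I would then prove that the adjustment loop keeps the invariant that every finalized position is consistent‑forced and that $A[i']$ (the value at the start of the current last slope) is the largest $\proc{Validate}$‑candidate for $\pi[i']$ not yet excluded. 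The point is that the two queries implement exactly the slope versions of \eqref{eq:when pi is pi'}--\eqref{eq:when pi is not pi'}: the height‑query returns the first slope position $j$ with $A'[j]\ge A[j]$, where $A'[j]>A[j]$ witnesses — through $\inv3$ applied to the fixed prefix — that no consistent extension exists, making rejection sound, whereas $A'[j]=A[j]$ forces the slope to end at $j$ by \eqref{eq:when pi is pi'}, which is precisely the split the algorithm performs; the value‑query tests whether $A'[j]=A'[A[j]]$ for the interior slope positions, i.e.\ whether each may legitimately be followed by $A[j]+1$ by \eqref{eq:when pi is not pi'}, a failure forcing a move to the next smaller candidate. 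Since $A[i']$ strictly decreases through the finite $\proc{Validate}$‑candidate set and rejection is declared as soon as it would have to drop below $0$, the loop terminates; since we always retain the largest admissible candidate and the prefix is forced, the resulting $A[1\twodots n+1]$ is pointwise maximal (giving $\inv3$), while $\inv1$ and $\inv2$ hold slope by slope by construction.

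Finally I would check error soundness: if the algorithm rejects, then no $A[1\twodots n+1]$ consistent with $A'[1\twodots n]$ exists; but a valid strong prefix array — a prefix of $\pi'_{w_0}$ for some $w_0$ with $|w_0|\ge n$ — always admits one, since if $|w_0|\ge n+1$ we may take $w=w_0[1\twodots n+1]$ (the interior definition of $\pi'$ gives $\pi'_w[n]=\pi'_{w_0}[n]=A'[n]$), and if $|w_0|=n$ we append to $w_0$ a letter different from $w_0[\pi_{w_0}[n]+1]$; in both cases $A=\pi_w[1\twodots n+1]$ is consistent with $A'[1\twodots n]$, a contradiction. Hence an error is raised exactly when $A'[1\twodots n]$ is not a valid strong prefix array, completing the induction. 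The step I expect to be the crux is the adjustment‑loop invariant: verifying that the height‑ and value‑queries together capture exactly \eqref{eq:when pi is pi'}--\eqref{eq:when pi is not pi'} along a slope, that feeding $\proc{Validate}$ only the fixed values still yields the correct candidate sets (safe because the right‑hand side of \eqref{eq:candidates} refers only to $\father[i]<i$), and that the greedy top‑down search over candidates preserves maximality.
\qed
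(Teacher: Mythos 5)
Your plan mirrors the paper's proof in structure: induction on $n$, the easy branch where $A'[n]=A'[A[n]]$ and the slope is extended, the hard branch where Lemma~\ref{inv4} freezes $A[1\twodots i-1]$, the height-query and value-query implementing the slope-wise versions of~\eqref{eq:when pi is pi'}--\eqref{eq:when pi is not pi'}, termination via strictly decreasing candidate values, and the soundness of rejection. Your error-soundness argument at the end (reject implies no consistent extension, but a valid $A'$ always admits one by extending $w_0$ appropriately) is correct and slightly more self-contained than the paper's contrapositive phrasing.

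However, there is a genuine gap at exactly the point you flag as ``the crux'': you assert that ``we always retain the largest admissible candidate,'' but you never justify why, when the value-query fails for the current $A[i]$, it is safe to discard that candidate. The danger is that some consistent $A_1$ might still have $A_1[i]=A[i]$ while diverging lower down the slope, in which case moving to a strictly smaller candidate would break maximality (\inv3). The paper closes this with a short but essential extremal argument: if the value-query fails at some $j$ on the slope, then $A_1[j]\neq A[j]$, and one then shows $A_1[i]\neq A[i]$ by taking the smallest $j'\ge i$ with $A_1[j'+1]\neq A[j'+1]$ and deriving from \inv3 and~\eqref{eq:when pi is pi'} that $A'[j']=A_1[j']=A[j']$, contradicting the assumption that the height-query returned no such index. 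Without this step (or something equivalent), the claim that the greedy descent through candidate values preserves pointwise maximality is unsupported, and with it falls the derivation of \inv3 after an adjustment. A similar but easier argument is also needed (and likewise only sketched by you) to justify that when the height-query returns $j$ with $A'[j]=A[j]$, the naive check of the equality constraints on $[i\twodots j-1]$ is a sound rejection criterion, i.e.\ that any consistent $A_1$ must agree with $A$ on that interval. Filling in these two pieces would complete the proof along the paper's lines.
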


\begin{proof}
We proceed by induction on $n$.
If $n=0$, then clearly $A[1]=0$ and all the invariants trivially hold and $A'$ is a valid $\pi'$ array.

Whenever a new symbol is read, then \proc{Validate-$\pi'$} checks 
the second condition of \eqref{A jest mniejsze niz A'} for $j=n$. If it holds, then no changes are needed because:

Condition \inv1 holds trivially: we implicitly set $A[n+1]=A[n]+1$,
which is always a valid value for $\pi[n+1]$.

Condition \inv2 holds: since $A'[n]=A'[A[n]]<A[n]$, by \eqref{eq:when pi is not pi'} $A[n]$ is properly defined.

Condition \inv3 holds: consider any $B[1 \twodots n+1]$ consistent with $A'[1 \twodots n]$.
By induction assumption \inv3 holds for $A[1 \twodots n]$ hence $B[n] \leq A[n]$.
Therefore $B[n+1] \leq B[n]+1 \leq A[n]+1=A[n+1]$. Thus $A$ is still maximal.

Suppose that the second condition of \eqref{A jest mniejsze niz A'} is not satisfied for $j=n$.
Then the algorithm starts adjusting $A$.
Consider first a case when there is a function consistent with $A'[1 \twodots n]$.
We show that during the adjusting \inv3 holds --- i.e.
for every $A_1[1\twodots n+1]$ consistent with $A'[1\twodots n]$ it holds that
$A_1[1\twodots n+1] \leq A[1\twodots n+1]$, even though $A[1 \twodots n+1]$
may not be consistent with $A'[1 \twodots n]$.
Moreover, during the adjustments \inv1 is preserved. In the end we show that when
the adjustments stop then \inv2 is satisfied.

So let $A_1$ be as described earlier.
By Lemma~\ref{inv4}, $A[1 \twodots i-1] = A_1[1 \twodots i-1]$.
The algorithm repeatedly checks conditions \eqref{A jest mniejsze niz A'}
using height-query and value-query. Suppose that height query returns $j$ such that $A[j]< A'[j]$.
Then, since \inv3 is satisfied, $A_1[j] \leq A[j] <A'[j]$, i.e. $A_1$ is not a valid $\pi$ table. Contradiction.
So suppose that height-query returned $j$ such that $A[j] = A'[j]$. Then
similarly, by \inv3,
$A_1[j] \leq A[j] = A'[j]$ but as $A_1$ is a valid $\pi$ table, it also holds that $A_1[j]\geq A'[j]$.
So $A[j]=A'[j]=A_1[j]$, and by \eqref{eq:when pi is pi'} $j$ is an end of a slope for $A_1$.
Since $A_1$ is chosen arbitrarily, this holds for any table $B$ consistent with $A'$.
Then, by Lemma~\ref{inv4}, $A[i \twodots j]=A_1[i \twodots j]$.
Since for $j \in [i \twodots j-1]$ it holds that $A[j] > A'[j]$ then, by\eqref{eq:when pi is not pi'},
$A[j]$ and $A'[j]$ must satisfy equation $A'[j] = A'[A[j]]$.
If this equation is not satisfied by some $j$ then clearly we reject the output, as $A_1$ is invalid as well.
Then the algorithm sets $i$ to $j+1$ and sets $A[i]$ to the largest possible candidate value for $\pi[i]$
smaller than $A[i-1]+1$. Note that $A_1[i] < A[i-1]+1$ by \eqref{eq:when pi is pi'}.
The implicit values $A[j]$ for $j \in [i+1 \twodots n]$ satisfy $A[j] = A[i] + (j-i)$ .
Thus \inv1 is still satisfied --- all the changed values of $A$
are valid $\pi$ candidate values at the respective positions.
Since $A_1$ is a valid $\pi$ table,
$$
A_1[j] \leq A[i] + (j-i) \leq A[i] + (j-i) = A[j] \enspace ,
$$
i.e., $A$ still satisfies \inv3.

If for all $j \in [i \twodots n]$ it holds that $A[j] > A'[j]$, then the algorithm asks the value query.
Suppose it is not satisfied for some $j$, i.e. $A'[j] \neq A'[A[j]]$.
Hence $A_1[j] \neq A[j]$, as by \eqref{eq:when pi is pi'}--\eqref{eq:when pi is not pi'}
either $A_1[j]=A'[j]<A[j] $ or $A'[j] \neq A'[A_1[j]]$ . We show that $A_1[i] \neq A[i]$ ---
assume otherwise and consider the smallest $j'$ such that $A_1[j'+1] \neq A[j'+1]$.
Since \inv3 holds
$$
A_1[j'+1] < A[j'+1]=A[j']+1=A_1[j']+1 \enspace .
$$
Hence $A_1[j'+1] < A[j']+1$. By \eqref{eq:when pi is pi'}, $A'[j']=A_1[j']=A[j']$, contradiction.
On the other hand, \proc{Validate-$\pi'$} sets $A[i]$ to next largest valid candidate for $\pi[i]$.
So $A[i] \geq A_1[i]$ still holds. Since the implicit values satisfy $A[j] = A[i] + (j-i)$
for $i \in [i+1 \twodots n ]$, then also $A_1[j] \leq A_1[i]+(j-i) \leq A{i} +(j-i) = A[j]$.
So \inv3 still holds for $A$. Note again that $A[i \twodots n+1]$ were all
assigned valid candidates for $\pi$
at their respective positions, so \inv1 still holds.

Now we show that when both conditions
\eqref{A jest mniejsze niz A'}
are satisfied (i.e. when all the adjustments are finished)
invariant \inv2 is satisfied as well.
Note that~\eqref{eq:when pi is pi'}
and~\eqref{eq:when pi is not pi'} give the following formulas for $\pi'$ in terms of $\pi$:
$$
	\pi'[j]= \begin{cases}
		\pi[j] &\text{ if } j \text{ is the last element of some slope} \enspace ,\\
		\pi'[\pi[j]] &\text{ in other cases} \enspace .
	\end{cases}
$$
Those formulas are verified for $j$ when $A[j]$ is stored.
For $j$ such that $A[j]$ are implicit values, i.e. such that $j$ is on the last slope of $A$,
those are verified by value-query.
Hence when all adjustments are finished, \inv2 holds.

Suppose now that the input is not a valid strong prefix table.
We show that if \proc{Validate-$\pi'$} accepts the input then $A'$ is valid.
Since \inv1 was preserved during the adjustments, $A$ is a valid $\pi$ table.
Moreover, for each position $j$ conditions \eqref{A jest mniejsze niz A'}
are satisfied --- the adjustments of the slopes ends when they are satisfied for each position.
So $A'$ is a valid candidate for $\pi'_w$ such that $\pi_w = A$.
\qed
\end{proof}

\begin{theorem}\label{theorem validate pi'}
\proc{Validate-$\pi'$} correctly computes $\pi_w$, such that $A' = \pi'_w$
and calculates the required minimum size of the alphabet.
\end{theorem}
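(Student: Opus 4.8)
The plan is to assemble Theorem~\ref{theorem validate pi'} from the machinery already in place, treating it essentially as a corollary of Lemma~\ref{niezmienniki pi'} together with the standard behaviour of \proc{Validate}. First I would invoke Lemma~\ref{niezmienniki pi'}: after reading a valid $A'[1\twodots n]$, the table $A[1\twodots n+1]$ maintained by \proc{Validate-$\pi'$} satisfies \inv1--\inv3, so in particular $A$ is a genuine $\pi$ array and $\pi'$ of the corresponding word equals $A'$ on $[1\twodots n]$; and if $A'$ is not a valid strong prefix array the algorithm raises an error. This already gives the ``correctness'' half of the statement: the computed $A$ is $\pi_w$ for a word $w$ with $A' = \pi'_w$.

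Next I would argue that the $w$ and the alphabet size reported are correct. The key point, already noted in the text, is that \proc{Validate-$\pi'$} feeds the fixed prefix $A[1\twodots i-1]$ (or $A[1\twodots i]$ when $A[i]=0$) into \proc{Validate} as soon as those values become permanent, and that \proc{Validate} never errs on this input because, by \inv1, it is a valid $\pi$ prefix. By the correctness of \proc{Validate} (equations~\eqref{A is 0}--\eqref{different_letters} and the recurrence~\eqref{eq:candidates}), the word $w$ it constructs letter by letter and the value $\id{max-alph}$ it maintains are exactly a minimum-alphabet realisation and the minimum alphabet cardinality for the $\pi$ array consumed so far. I would then note that the implicit part of $A$ --- the last slope, where $A[i+j]=A[i]+j$ --- contributes no new constraints: on a slope one has $w[i+j]=w[A[i]+j]$, so these positions are forced copies of earlier letters and add nothing to the alphabet; hence it suffices that \proc{Validate} sees the fixed values. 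Because the set of candidates for position $i$ depends only on candidates of $\father[i]<i$ by~\eqref{eq:candidates}, this candidate set is available for the current last slope as well, which is what lets the adjusting step in Lemma~\ref{niezmienniki pi'} pick ``the next largest valid candidate for $\pi[i]$'' correctly.

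Finally I would tie off the minimality claim for the alphabet. Since $w$ realises $A = \pi_w$ over $\id{max-alph}$ letters, the required alphabet is at most $\id{max-alph}$; conversely any word $v$ with $\pi_v = A$ must assign distinct letters along each root-to-leaf chain of $T$ wherever $A$ forces an inequality, which is exactly the quantity $\id{alph}[\cdot]$ tracks, so no smaller alphabet works --- this is precisely the classical correctness of \proc{Validate} restated, so I would just cite it rather than reprove it.

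The main obstacle I expect is the bookkeeping argument that the implicit last-slope values of $A$ genuinely require no interaction with \proc{Validate} for computing the alphabet: one must be sure that when a slope is later ``frozen'' (its values stored and handed to \proc{Validate}), no alphabet-increasing event was missed in the interim, and that the candidate set for the first position of each new last slope is the one \proc{Validate} would have produced. Everything else is a direct appeal to Lemma~\ref{niezmienniki pi'} and to the already-cited correctness of \proc{Validate}.
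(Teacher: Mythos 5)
There is a genuine gap, and it is precisely the point that the paper's proof spends most of its effort on. The theorem's claim "calculates the required minimum size of the alphabet" means minimal over \emph{all} words $w$ with $\pi'_w = A'$, not merely the minimum alphabet for the particular table $A$ that \proc{Validate-$\pi'$} happens to build. Your argument establishes only the weaker statement: that $\id{max-alph}$ is the minimum alphabet size realising the specific $A$ computed (and that feeding only the fixed prefix to \proc{Validate} loses nothing since the last-slope positions satisfy $A[j]>0$, hence add no letters). You never rule out that some \emph{other} table $B$ consistent with $A'$ --- differing from $A$ on the last slope, or, a priori, anywhere --- might be realisable over a strictly smaller alphabet. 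The paper closes this gap by invoking Lemma~\ref{inv4} a second time, for a purpose you did not use it for: for any $B$ consistent with $A'$ one has $B[1\twodots i-1]=A[1\twodots i-1]$, so the alphabet needed for $B$ is at least the alphabet needed for $B[1\twodots i-1]=A[1\twodots i-1]$, which (since $A[j]>0$ for $j\geq i$, or $A[i]=0$ forced by \inv3) equals the alphabet needed for $A$. Thus $A$ is an alphabet-minimal choice among all consistent tables. You cite Lemma~\ref{inv4} only to justify that the adjusting step in Lemma~\ref{niezmienniki pi'} can identify the next largest candidate --- a correct but different use. Without the additional appeal to Lemma~\ref{inv4} for alphabet comparison across consistent tables, the minimality claim of the theorem is not established.

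Apart from this omission, your decomposition mirrors the paper's: invoke Lemma~\ref{niezmienniki pi'} for validity of $A$ and the error behaviour, observe that \proc{Validate} is run on the fixed prefix and cannot err by \inv1, and argue that the last slope contributes no new letters. The "main obstacle" you flag (that no alphabet-increasing event is missed while a slope is implicit) is handled in one line in the paper --- $A[j]>0$ on the slope --- and is not where the real work lies; the real work is the cross-table comparison you skipped.
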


\begin{proof}
By Lemma~\ref{niezmienniki pi'} \proc{Validate-$\pi'$} raises an error if the input
table is invalid and otherwise returns the size of the alphabet.
So we need to show that the size of the alphabet is computed correctly.

By invariant \inv1, $A[1 \twodots n+1] = \pi_w[1 \twodots n+1]$ for some word $w$
and by invariant \inv2 $\pi'_w = A'$.
Word $w[1 \twodots i-1]$ is in fact created by \proc{Validate-$\pi$}.
So two questions remain:
is the alphabet required for $w$ minimal for $A'$ and is the answer given by
\proc{Validate-$\pi$} really the alphabet needed for $w[1 \twodots n]$,
(as \proc{Validate-$\pi$} is run on the prefix $A[1\ \twodots i-1]$ only).

Suppose first that $A[i]>0$.
Thus \proc{Validate-$\pi$} run on
$A[1 \twodots n]$ returns the same size of required alphabet as run on
$A[1 \twodots i-1]$, as new letters are needed only when $A[j]=0$ at some position,
and $A[j]>0$ for $j$ on the last slope. So consider any $B[1 \twodots n+1]$
consistent with $A'[1 \twodots n]$. Then by Lemma~\ref{inv4} $B[1 \twodots i-1] = A[1 \twodots i-1]$.
Clearly \proc{Validate-$\pi$}$B[1 \twodots i-1]$ reports the required size of the alphabet
not larger than \proc{Validate-$\pi$}$B[1 \twodots n]$. 
So indeed $A$ does not require larger alphabet than $B$.


Suppose now that $A[i]=0$. Then for any $B[1 \twodots n+1]$ consistent with $A'[1 \twodots n]$, by \inv3
it holds that $0 \leq B[i] \leq A[i]=0$.
Since $A[j]>0$ for $j>i$, the same argument as previously works.
\qed
\end{proof}

\subsubsection*{Answering queries}
Consider first height-query.
The idea is that if $A'[j'] - A'[j] > j' - j > 0$ then $j$ cannot be an answer to height-query.
Using this observation a list of possible answers can be kept and quickly updated.
\begin{lemma}
\label{lem:pi is pi' constant query}
Answering all height-queries can be done in amortised linear time.
\end{lemma}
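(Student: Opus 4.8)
The plan is to analyse the amortised cost of all height-queries over the whole run of \proc{Validate-$\pi'$} by maintaining, together with the last slope, an explicit list of \emph{candidate answers}: indices $j$ in the current last slope for which it is still possible that $A'[j]\geq A[j]$ once $A[i]$ is decreased. First I would observe that on the last slope $A[j]=A[i]+(j-i)$, so the condition $A'[j]\geq A[j]$ is equivalent to $A'[j]-j \geq A[i]-i$; thus, for a \emph{fixed} value of the candidate $A[i]$, the first index answering the height-query is the smallest $j\in[i\twodots n]$ with $A'[j]-j$ at least a known threshold. The key monotonicity fact, already hinted at in the paragraph preceding the lemma, is that if $j<j'$ and $A'[j']-A'[j] > j'-j$, i.e. $A'[j']-j' > A'[j]-j$, then $j$ can never be the height-query answer before $j'$ is: whenever the threshold is low enough to make $A'[j]\geq A[j]$ true, it already made $A'[j']\geq A[j']$ true, and $j'$ (being smaller is required, but here larger index) — more precisely one keeps only indices that are left-to-right maxima of $A'[j]-j$, since any $j$ dominated by an earlier $j''<j$ with $A'[j'']-j''\geq A'[j]-j$ is irrelevant.

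Concretely I would keep a stack (or doubly linked list) $L$ of the indices $j$ in $[i\twodots n]$ that are prefix-maxima of the function $g(j):=A'[j]-j$, stored in increasing order of $j$ and hence of $g(j)$. When a new value $A'[n]$ arrives we append $n$ to the slope and update $L$: pop from the top all indices $j$ with $g(j)\le g(n)$ and push $n$. A height-query with current candidate value $A[i]$ then amounts to a search in $L$ for the first $j$ with $g(j)\ge A[i]-i$; since $L$ is sorted this is the bottom-most such element, and because the adjusting phase only ever \emph{decreases} $A[i]$, successive height-queries within one adjusting phase ask for a nonincreasing threshold, so the relevant element of $L$ only moves in one direction and we can resume the scan where we left off. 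When a new slope is split off and $i$ is reset to $j+1$, the list $L$ is truncated to its part with index $>j$ (a prefix-in-value, suffix-in-index of the old $L$), which we can do by walking from the appropriate end.

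For the amortised bound I would charge: (i) each input symbol pays $O(1)$ for the single push of $n$ onto $L$; (ii) each pop from $L$ is charged to the push that put that element there, so the total number of pops over the whole execution is $O(n)$; (iii) within an adjusting phase, the total work of the (resumed, monotone) scans over $L$ answering successive height-queries is bounded by the number of candidate-value decreases of $A[i]$ plus the number of elements the scan passes, and each element of $L$ is passed at most once per phase in a single direction — here I would tie the number of height-queries and of $A[i]$-decreases in a phase to the number of elements removed from the last slope or the number of valid candidates for $\pi[i]$ consumed, both of which are globally $O(n)$ by the linear bound on the total number of valid candidates (the same Simon-type argument cited for \proc{Validate}) and the linearity of $n$; (iv) the truncations of $L$ at slope splits are charged to the elements discarded. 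Summing, every list operation is paid for once, giving $O(n)$ total, i.e. amortised constant per height-query. The main obstacle I expect is step (iii): making precise why the scans resuming across successive height-queries inside one adjusting phase never re-traverse the same portion of $L$ and why the number of $A[i]$-decrements is globally linear — this requires carefully coupling the movement of the scan pointer to the monotone decrease of the threshold $A[i]-i$ and to the consumption of candidate values from $\id{cand}[i]$, so that the potential $\Phi=|L|+(\text{remaining candidates for the current }\pi[i])+(\text{length of last slope})$ can be shown to decrease by enough on each costly step.
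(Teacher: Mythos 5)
Your proposal is in the right ballpark --- you identify the key quantity $g(j)=A'[j]-j$ and propose a monotone-stack-style list of surviving indices, which is precisely the paper's machinery --- but there are two genuine confusions that make the argument fail as written.

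First, you mix up the two dual monotonicity observations. You state the useful one (``if $j<j'$ and $g(j')>g(j)$ then discard $j$''), but then justify it by saying that $j$ ``can never be the height-query answer before $j'$,'' which is false: the height-query asks for the \emph{smallest} index satisfying $A'[j]\ge A[j]$, and $j<j'$, so if both satisfy, $j$ \emph{is} the answer. You half-notice this (``being smaller is required, but here larger index'') and then sidestep it by pivoting to ``left-to-right maxima'' --- which is a \emph{different} filter that keeps indices with no \emph{earlier} $j''$ having $g(j'')\ge g(j)$, resulting in a list where $g$ is \emph{increasing}. But your own update rule (``pop all $j$ with $g(j)\le g(n)$, then push $n$'') maintains a list with $g$ strictly \emph{decreasing}, which corresponds to the first filter, not the second. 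The two filters keep different sets; you cannot use one's description and the other's update. The paper uses the first filter (keep $j$ iff no later $j'$ dominates it, so $g$ is non-increasing along $j_1<\dots<j_k$), and the missing justification for discarding a dominated $j$ is that if $j$ were the answer with $g(j)\ge A[i]-i$, then its dominator $j'$ has $g(j')>g(j)\ge A[i]-i$, hence $A'[j']>A[j']$, i.e.\ the instance is invalid anyway and $j'$ signals it. That invalidity argument is essential and absent from your proposal.

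Second, because of this confusion you propose a scan over $L$ to answer each query, and you correctly flag the amortisation of those scans across successive decreases of $A[i]$ as your ``main obstacle.'' The paper removes that obstacle entirely: with the first filter, $g(j_1)$ is the maximum over the whole list, so the height-query reduces to a single test ``is $g(j_1)\ge A[i]-i$?'' in $O(1)$ worst-case time, independent of the current threshold. (Indices outside the list in $[i,j_1)$ are dominated by $j_1$, so if any of them satisfies the condition, $j_1$ already reports invalidity; indices $j_\ell>j_1$ in the list satisfy $g(j_\ell)\le g(j_1)$, so they cannot succeed when $j_1$ fails.) The amortised-linear bound then comes purely from the push/pop charging you already describe, with no need for the potential function or the coupling to candidate consumption that you sketch. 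In short: your data structure is the right one but oriented the wrong way, and the insight that makes each query $O(1)$ --- and with it the invalidity-based justification for pruning --- is the part you would need to supply.
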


\begin{proof}

\begin{figure}
\centering
\includegraphics{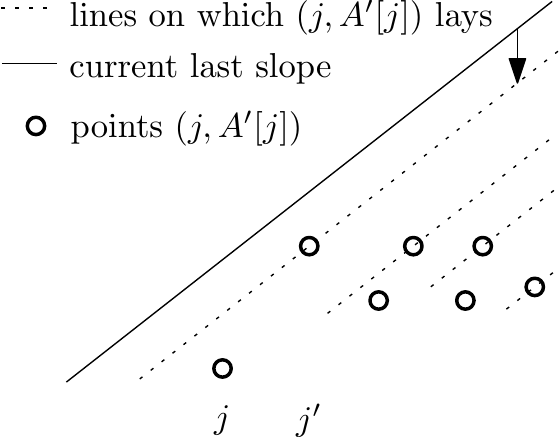}
\caption{Answering height-queries.\label{endofslopes}}
\end{figure}

Consult Fig.~\ref{endofslopes}.
The idea is as follows: consider any two indices $j$, $j'$ such that
$$
A'[j'] - A'[j] > j' - j > 0 \enspace .
$$
We denote this relation by $j \prec j' $ and say that $j'$ \emph{dominates} $j$.
Then $j$ cannot be an end of any slope, if $A[j]=A'[j]$ then
$$
A[j'] \leq A[j] + (j'-j) = A'[j] + (j'-j) < A'[j] + A'[j'] - A'[j] = A'[j'] \enspace ,
$$
contradiction.
Note that $j \prec j'$ and $j' \prec j''$ implies $j \prec j''$: clearly $j < j' < j''$ and
$$
A'[j'] - A'[j] > j' - j \text{ and } A'[j''] - A'[j'] > j'' - j'
$$
summed up implies that
$$
A'[j''] - A'[j] > j'' - j \enspace .
$$

This can be reformulated in terms of height-queries: if $j < j'$ and $A'[j'] - A'[j] > j' - j$
then $A'[j] \geq A[j] $ implies $A'[j'] > A[j']$, i.e. that the instance is invalid.
Hence we need not keep track of $j$ as a potential answer to the height-query.
It is enough to keep a list of positions $j_1 < j_2 < \ldots < j_k$
such that $j_i \not \prec j_\ell$ for all $i,\ell$
and $j_\ell$ dominates all $j \in [j_{\ell -1}+1 \twodots j_\ell-1]$.

When the query is asked we check if $A[j_1] \leq A'[j_1]$.
We show that evaluating this expression for other values of $j$ is not needed.
Suppose that $A'[j] \geq A[j]$ for some $j \in [j_{\ell -1}+1 \twodots j_\ell-1]$.
Then since $j_\ell$ dominates $j$ it holds that $A'[j_\ell] > A[j_\ell]$.
Suppose now that $A'[j_\ell] \geq A[j_\ell]$ for $j_\ell > j_1$.
Then since $j_1 < j_\ell $ and $j_\ell$ does not dominate $j_1$ it holds that
$A'[j_\ell] - A'[j_1] \leq j_\ell - j_1$.
As $j_1$ and $j_\ell$ are on the last slope then $A[j_\ell] = A[j_1] + (j_\ell - j_1)$, hence
$$
A[j_1] = A[j_\ell] - (j_\ell - j_1) \leq A[j_\ell] -( A'[j_\ell] - A'[j_1]) \leq A'[j_1] \enspace ,
$$
so $j_1$ is a proper answer to the height-query.
So the height-query is answered in constant time.

We demonstrate that all updates of the list $j_1 , \ldots , j_k$ can be done
in $\mathcal O (n)$ time.
When new position $n$ is read, we update the list
by successively removing $j_\ell$'s dominated by $n$
from the end of the queue.
By routine calculations, if $n$ dominates $j_\ell$, then it dominates $j_{\ell+1}$:
\begin{align*}
A[j_{\ell+1}] - A[j_{\ell}] \leq j_{\ell+1} - j_\ell \enspace , \\
A[n] - A[j_{\ell}] > n - j_\ell
\end{align*}
imply
$$
A[n] - A[j_{\ell+1}] > n - j_{\ell+1} \enspace .
$$
So we have to remove some suffix of the kept list of $j$'s.

Suppose that $j_\ell , \ldots , j_k$ were removed. Then $j_\ell , \ldots , j_k \prec n$,
so $j \prec n$ for each $j \in [j_{\ell-1} + 1 \twodots j_k-1]$.
Moreover $j_{\ell-1} \not \prec n$ and thus also $j \not \prec n$ for $j = j_1, \ldots j_{\ell-1}$.

As each position enters and leaves the list at most once,
the time of update is linear.
\qed
\end{proof}

To answer value-queries efficiently we construct online a suffix tree \cite{McCreight,Ukkonen}
for the input table $A'[1 \twodots n]$. Answering the value queries can be done by dividing the query into
two sub-queries, one is checked naively the other by traversing up in the suffix tree.
It is possible to amortise both sub-queries.
\begin{lemma}
\label{lem:value query}
Answering all value-queries can be done in $\mathcal O (n \log n)$ time.
\end{lemma}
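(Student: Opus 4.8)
The plan is to reformulate each value-query as a longest-common-extension query on the array $A'$ and answer it with the online suffix tree that \proc{Validate-$\pi'$} already maintains. A value-query asks whether $A'[i \twodots n] = A'[q \twodots q+(n-i)]$, where $q := A[i]$; since $0 < q < i$, the suffix $A'[q \twodots n]$ is strictly longer than the suffix $A'[i \twodots n]$, so the query is equivalent to asking whether $A'[i \twodots n]$ is a prefix of $A'[q \twodots n]$, i.e. whether the locus of $A'[i \twodots n]$ in the suffix tree of $A'[1 \twodots n]$ is an ancestor of the locus of $A'[q \twodots n]$, equivalently whether the ancestor of the latter locus at string-depth $n-i+1$ is exactly the former. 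I would therefore store, alongside the suffix tree of \cite{McCreight,Ukkonen}, the string-depth of every explicit node and an online level-ancestor structure on the tree \cite{Alstrup00improvedalgorithms} --- the same one used in Section~\ref{sec:pi verification} --- both maintainable in $\mathcal{O}(1)$ amortised time per appended symbol. The required weighted-ancestor query (deepest explicit ancestor of a given leaf with string-depth below a threshold) is then answered by a binary search over tree-depths driven by level-ancestor queries, in $\mathcal{O}(\log n)$ time; this is the ``traversing up in the suffix tree'' half of the query.

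The other, naive half handles the common case cheaply. While the algorithm keeps reading input symbols without adjusting, the first position $i$ of the last slope and its target $q=A[i]$ do not change, and the single comparison $A'[n]=A'[q+(n-i)]$ performed when $A'[n]$ is read certifies $A'[i \twodots n] = A'[q \twodots q+(n-i)]$ incrementally; so in this situation the value-query costs $\mathcal{O}(1)$. A genuine suffix-tree query is issued only while adjusting the last slope, when $i$ or $q$ has just changed: either a new slope has been split off (and $i$ jumped forward) or $A[i]$ has just been lowered to the next valid candidate for $\pi[i]$. A value-query re-issued right after reading a symbol whose $\mathcal{O}(1)$ check failed already fails at position $n$ and so may be skipped without a suffix-tree lookup, and hence is not counted among the expensive queries.

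It remains to count the expensive queries. The first position of the last slope is non-decreasing over the whole run (reading a symbol never moves it and an adjustment only splits slopes off on the left), so the number of queries triggered by slope creation is at most the number of slopes ever created, which is $\mathcal{O}(n)$. For a fixed value of $i$ the stored candidate $A[i]$ only decreases over the lifetime of $i$, running through a subsequence of $\id{cand}[i]$; by the recurrence~\eqref{eq:candidates} together with the linear bound on the total number of valid $\pi$-candidates over all positions of a word of length $\mathcal{O}(n)$ (the argument recalled after~\eqref{eq:candidates}, via \cite{Simon}), the total number of candidate-lowerings --- hence of expensive value-queries --- is $\mathcal{O}(n)$. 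Multiplying by the $\mathcal{O}(\log n)$ cost of one suffix-tree query, and adding the $\mathcal{O}(n)$ cost of all naive halves and of building and annotating the online suffix tree (its auxiliary data are small, as $\depth' \in \mathcal{O}(\log n)$ by Lemma~\ref{pi' cube}), gives the claimed $\mathcal{O}(n\log n)$ bound.

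The main obstacle I expect is making this accounting airtight on two fronts. One must verify that every value-query is either an $\mathcal{O}(1)$ continuation of the previous state or is paid for by a distinct freshly created slope or a distinct candidate-lowering, so that no query slips through uncharged; this is where the monotonicity of $i$ and the linear candidate bound are essential. Second, realising the weighted-ancestor query on an \emph{online} suffix tree within the budget requires threading the string-depths and the level-ancestor structure of \cite{Alstrup00improvedalgorithms} through Ukkonen's updates, and, when the suffix in question has only an implicit locus, prefixing the query with a short naive scan to reach an explicit node --- all in $\mathcal{O}(1)$ amortised time per input symbol.
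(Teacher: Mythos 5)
Your proposal is correct in its overall counting and reaches the same bound, but it proves the lemma along a genuinely different route, so a comparison is in order. The paper never pays $\mathcal{O}(\log n)$ per suffix-tree query. Instead, for a candidate lowering of size $\ell_j$ it splits the test $A'[i\twodots n]=A'[A[i]-\ell_j\twodots A[i]-\ell_j+(n-i)]$ into a naive comparison of the first $\ell_j$ positions, a one-symbol check at the end, and a \emph{self-overlap} test $A'[i+\ell_j\twodots n-1]=A'[i\twodots n-\ell_j-1]$. The crucial point is that the third test is derived from the already-verified equality $A'[i\twodots n-1]=A'[A[i]\twodots A[i]+(n-i)-1]$ (which held for the previous candidate), and it is answered by walking exactly $\ell_j$ characters up from the locus of $A'[i\twodots n-1]$; thus a single query costs $\mathcal{O}(\ell_j)$, not $\mathcal{O}(\log n)$. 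The amortization is then over the total decrease: with $\mathcal{O}(\log n)$ candidates per fixed $i$ (Lemmas~\ref{pi is usually pi'} and~\ref{pi' cube}) and $\sum_i\ell_{k_i}\leq n$ (because the extrapolated $A[n]$ rises by at most $1$ per symbol), the total is $\mathcal{O}(n\log n)$. You instead charge $\mathcal{O}(\log n)$ per ``expensive'' query and bound the number of expensive queries by $\mathcal{O}(n)$ via slope creations plus candidate lowerings. Both accountings give $\mathcal{O}(n\log n)$, and your count of candidate lowerings can in fact be obtained more directly from the same potential $\sum_i\ell_{k_i}\leq n$ (each lowering decreases the candidate by at least one) rather than from Simon's transition bound, which concerns a fixed word and is a slightly awkward fit once the array is still being rewritten.

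What your route buys is conceptual simplicity of the per-query reasoning: a direct LCE/prefix-containment test with no need for the three-way decomposition and no reliance on the invariant established by the previous successful query. What it costs is heavier machinery on the suffix tree: you need a weighted-ancestor (string-depth predecessor) query on an \emph{online} suffix tree. This is more delicate than you acknowledge. The level-ancestor data structure of~\cite{Alstrup00improvedalgorithms}, as used in Section~\ref{sec:pi verification}, supports leaf additions only, while Ukkonen's construction also splits internal edges, so threading that structure through the suffix-tree updates is not immediate; and the leaf for $A'[i\twodots n]$ may still be implicit when the query is asked, so the ``short naive scan to reach an explicit node'' needs its own bound (the paper sidesteps this by starting the walk from the locus of $A'[i\twodots n-1]$ and moving up only $\ell_j$ characters, which stays within the $\mathcal{O}(\ell_j)$ budget regardless). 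None of this breaks your argument, but these are genuine implementation obligations that the paper's decomposition avoids, and they deserve to be discharged explicitly rather than deferred.
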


\begin{proof}
First of all, a suffix tree is constructed online \cite{McCreight,Ukkonen} for the input table $A'[1 \twodots n]$.
This takes $\mathcal O (n \log n)$ time --- the construction is linear in length of the word but logarithmic
in the size of alphabet. Since $A'$ may have values up to $n-1$ we have to include the logarithmic factor.

Fix an index $i$ and consider all value-queries asked
while $i$ was considered $i$ as the beginning of the last slope.
The set of valid candidates for $\pi[i]$ is of size $\mathcal O (\log n)$:
by Lemma~\ref{pi is usually pi'} only one candidate is not of the form $\father'^{(j)}[i]$
for some $j$ and by Lemma~\ref{pi' cube} there are only $\mathcal O (\log n)$
positions of such form.
Hence for a fixed position $i$ there are $\mathcal O (\log n)$ value-queries asked.
Suppose that the queries were asked for candidates $A[i]-\ell_1$, \ldots, $A[i]-\ell_{k_i}$.
We show that the query about $A[i] - \ell_j$ can be answered in $\mathcal O (\ell_j)$.
Then
$$
\mathcal O (\sum_{m=1}^{k_i} \ell_m) = \mathcal O (\sum_{m=1}^{k_i} \ell_{k_i}) =
\mathcal O (k_i \ell_{k_i}) = \mathcal O (\ell_{k_i} \log n) \enspace .
$$
Then we sum over all possible $i$ and show that $\sum_i \ell_{k_i} \leq n$
hence the result can be upper bounded $\mathcal O (n \log n)$.

\begin{figure}
	\includegraphics{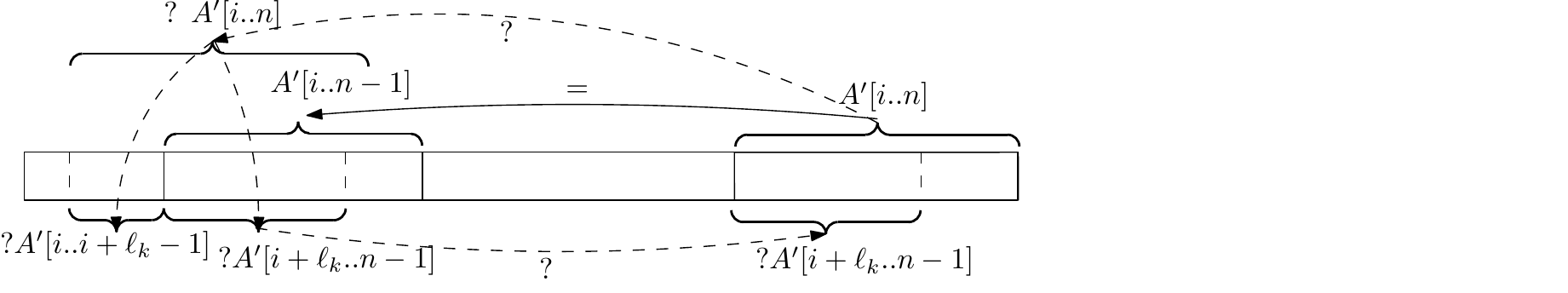}
	\caption{The subqueries of value-query. Solid lines represent the known
	equalities between fragments of $A'$ table, dashed lines represent the tests.
	The test and equalities should be read according to the arrow-heads.
	\label{valuequery}}
\end{figure}

Suppose that we check whether $A'[i \twodots n] = A'[A[i]-\ell_j \twodots A[i]-\ell_j + (n-i)]$.
If $\ell_j \geq (n-i)$, this can be done naively in $\mathcal O (n-i) = \mathcal O (\ell_j)$ time.
If $\ell_j < (n-i)$, we divide the tests into three subtests, see Fig.~\ref{valuequery}.
On one hand, we check naively whether
$$
A'[i \twodots i + \ell_j-1] = A'[A[i] - \ell_j \twodots A[i]-1] \enspace ,
$$
this is done in $\mathcal O (\ell_j)$ time.
We also naively check $A[n] = A[i] + (n - i - \ell_j)$ in constant time.
Finally we check whether
$$
A'[i + \ell_j \twodots n-1] = A'[A[i] \twodots A[i] + (n - i - \ell_j-1)] \enspace .
$$
We show how to perform this test efficiently.
Since $A[1 \twodots n]$is a valid candidate for $A'[1 \twodots n-1]$
and $i$ was the first position on the last slope of $A$, then
$$
A'[i \twodots n-1] = A'[A[i] \twodots A[i]+(n-i)-1] \enspace ,
$$
by \eqref{eq:when pi is not pi'}. Therefore
$$
A'[A[i] \twodots A[i] + (n - i - \ell_j)-1] = A'[i \twodots n - \ell_j-1] \enspace .
$$
So it is enough to check whether
$$
A'[i + \ell_j \twodots n-1] = A'[i \twodots n - \ell_j-1] \enspace ,
$$
i.e. whether $A'[i + \ell_j \twodots n-1]$ is a prefix of $A'[i \twodots n-1]$.
This is easily done using suffix trees:
wlg. we may assume that the check is made before $A[n]$ was added to the suffix tree.
We enrich the suffix trees: each node has a pointer to its father. This does not increase the built-time.
Then we go to the vertex corresponding to suffix $A'[i \twodots n-1]$ and traverse the tree $\ell_j$
letters up. We return whether suffix $A'[i + \ell_j \twodots n-1]$ ends in this node.
Traversing up costs $\mathcal O (\ell_j)$ time.

Consider again all the queries asked at position $i$ for valid candidates for $\pi[i]$ equal
$A[i]-\ell_1 > \ldots > A[i]-\ell_{k_i}$. Then $A[i]$ was replaced by $A[i] - \ell_{k_i}$,
hence also the value of $A[n]$ was replaced by $A[n] - \ell_{k_i}$.
Since $A[n]$ increases by at most $1$ when $n$ increases by $1$
thus $\sum_{i=1}^n \ell_{k_i}$ is linear in $n$. Therefore:
\qed
\end{proof}

\subsubsection*{Running time}
Construction of the suffix tree and answering value queries takes $\mathcal O (n \log n)$ time,
answering height-queries takes $\mathcal O (n)$ time. Running $\proc{Validate}(A')$ takes $\mathcal O (n)$ time.
Therefore the algorithm runs in $\mathcal O (n \log n)$ time.

\subsection*{Pseudocode of \proc{Validate-$\pi'$}}

\begin{codebox}
\Procname{\proc{Validate-$\pi'$}$(A')$}
\zi $A[1] \gets 0$, $\id{i} \gets 1$, $n \gets 0$
\zi \Repeat
\zi		$n \gets n+1$
\zi		\If $A'[n] > A[n]$ \kw{then} \Error $A'$ is not valid at $n$
\zi		$A[n+1] \gets A[n]+1$
\zi		\Repeat
\zi			$\id{change} \gets \const{false}$
\zi			\If there is $j \in [i \twodots n]$ such that $A'[j] > A[j]$ \Then 
\zi				\Error $A'$ is not valid at $n$
			\End
\zi			\If 	there is $j' \in [\id{f} \twodots n]$ such that $A'[j'] = A[j']$ \Then 
\zi				let $j \gets$ minimal such $j'$
\zi				\If $A'[i \twodots j-1] \neq A'[A[i] \twodots A[i] + (j-i-1)]$ \Then
\zi					\Error $A'$ is not valid at $n$
				\End
\zi 				run \proc{Validate-$\pi$}$(A)$ on positions $i\twodots j$
\zi				\If $A[j+1]=0$ \Then
\zi					run \proc{Validate-$\pi$}$(A)$ on position $j+1$
				\End
\zi				\For $m \gets i+1$ \To $j$	\Do
\zi					store $A[m] \gets A[m-1]+1$
				\End	
\zi				$i\gets j+1$
\zi				$\id{change} \gets \const{true}$
			\End
\zi			\If $A'[i \twodots n] \neq A'[A[i] \twodots A[i] + (n-i)]$ \kw{then} 
				$\id{change} \gets \const{true}$
\zi			\If $\id{change}$ \Then
\zi				\If $A[i]=0$ \kw{then} \Error $A'$ is not valid at $n$
\zi				$A[i] \gets $ next largest candidate value, including $0$ 
			\End
			\End
\zi		\Until $\kw{not} \id{change}$
	\Until Armageddon
\end{codebox}

\subsubsection*{Remarks}
While \proc{Validate-$\pi$} produced online a word $w$ over minimal alphabet
such that $\pi_w=A$ this is not the case with \proc{Validate-$\pi'$}. At each time-step
\proc{Validate-$\pi'$} can output a word over minimal alphabet such that $\pi'_w=A'$, but it is not possible
to do so online, as the letters assigned to positions on last slope can change during
the run of \proc{Validate-$\pi$}.

Note that since \proc{Validate-$\pi'$} keeps the function $\pi[1\twodots i+1]$
after reading input $A'[1 \twodots i]$, no changes are required
to adapt it to $g$ validation, where $g(i)=\pi'[i-1]+1$ is the function used in \cite{DuvalLecroqLefebvreopiprim}.

\section*{Open problems}
Two interesting questions remain: is there a real time algorithm for validating $A$ as $\pi$ in the pointer machine model? Is there a linear time algorithm for validating $A'$ as $\pi'$?
The latter probably requires eliminating suffix trees from the construction
or some clever encoding of values of $A'$.
We believe it can be done with better understanding of the underlying word combinatorics.



\bibliographystyle{abbrv}
\bibliography{onlinereconstruction}

\end{document}